\newenvironment{shrinkeq}[1]{
    \bgroup
    \addtolength\abovedisplayshortskip{#1}
    \addtolength\abovedisplayskip{#1}
    \addtolength\belowdisplayshortskip{#1}
    \addtolength\belowdisplayskip{#1}
}{
    \egroup\ignorespacesafterend
}
\newtheorem{Definition}{Definition}
\newtheorem{Lemma}{Lemma}
\newtheorem{Policy}{Policy}
\newtheorem{Scheme}{Scheme}
\newtheorem{BM}{BM}
\begin{document}
\title{
    Predictive Resource Allocation in mmWave Systems with Rotation Detection
    \thanks{
        This work was supported by the National Natural Science Foundation of China under Grant 62171213. \\\indent
        Rui Wang is the corresponding author (wang.r@sustech.edu.cn).
    }
}

\author{
Yifei~Sun{\textsuperscript\textdagger}{\textsuperscript\textdaggerdbl},
~Bojie~Lv{\textsuperscript\textdagger},
~Rui~Wang{\textsuperscript\textdagger},
~Haisheng~Tan{\textsuperscript\textsection},
~and~Francis~C.~M.~Lau{\textsuperscript\textdaggerdbl}\\
{\textsuperscript\textdagger}Southern University of Science and Technology, Shenzhen, China \\
{\textsuperscript\textdaggerdbl}The University of Hong Kong, Hong Kong, China \\
{\textsuperscript\textsection}University of Science and Technology of China, Hefei, China
}

\maketitle

\begin{abstract}
    Millimeter wave (MmWave) has been regarded as a promising technology to support high-capacity communications in 5G era. However, its high-layer performance such as latency and packet drop rate in the long term highly depends on resource allocation because mmWave channel suffers significant fluctuation with rotating users due to mmWave sparse channel property and limited field-of-view (FoV) of antenna arrays. In this paper, downlink transmission scheduling considering rotation of user equipments (UE) and limited antenna FoV in an mmWave system is optimized via a novel approximate Markov decision process (MDP) method. Specifically, we consider the joint downlink UE selection and power allocation in a number of frames where future orientations of rotating UEs can be predicted via embedded motion sensors. The problem is formulated as a finite-horizon MDP with non-stationary state transition probabilities. A novel low-complexity solution framework is proposed via one iteration step over a base policy whose average future cost can be predicted with analytical expressions. It is demonstrated by simulations that compared with existing benchmarks, the proposed scheme can schedule the downlink transmission and suppress the packet drop rate efficiently in non-stationary mmWave links.
\end{abstract}

\IEEEpeerreviewmaketitle
\section{Introduction}
Millimeter wave (MmWave) communications is one of the key technologies in 5G and beyond systems \cite{6824752}. In order to achieve desired signal-to-noise ratio (SNR) and high data rate, phased arrays with highly directional beams are adopted to overcome huge pathloss. However, because of the sparse propagation paths in mmWave channel between the base station (BS) and user equipments (UEs), pencil-shaped beams and limited field-of-view (FoV) of antenna arrays, UE mobility especially rotation will cause significant SNR fluctuation \cite{8422671}. Ignorance of such SNR fluctuation may lead to severe high-layer performance degradation such as large latency, buffer overflow and packet drop. Fortunately, exploiting statistical channel model and motion information measured from embedded motion sensors at the UEs makes future SNR fluctuation predictable. This raises a new design issue of joint scheduling in a large time-scale with non-stationary but predictable channel statistics caused by UE rotation.

There has been a number of works considering resource allocation in mmWave MIMO systems either within channel coherent time \cite{8170332} or in a larger time-scale with stationary channel statistics \cite{8782638, 8844783,8536429,7961156}. Scheduling in the former scenario leads to deterministic optimization problems, and infinite-horizon Markov decision process (MDP) or Lyapunov optimization is usually applied for the latter scenario. However, both methods are not applicable for large-time-scale scheduling with non-stationary channel statistics. Scheduling over non-stationary channel statistics with prediction of future channel statistics shall be addressed. For example, the authors in \cite{9184011} proposed an adaptive design for beam alignment, data transmission and handover by exploiting vehicle mobility in mmWave vehicular networks. In \cite{9171304, 9246715}, predictive beamforming is investigated by sensing vehicle mobility with radar to improve the efficiency of resource allocation. However, these works focus on vehicles driving along straight lanes considering only translation while UE rotation may raise more stringent requirement \cite{8422671}. Moreover, current works investigating temporal correlations of angle of departure (AoD) and angle of arrival (AoA) \cite{8686221} or exploiting motion sensors to capture device rotation \cite{9024551,8422671} to assist beam alignment neglect the effect of rapid SNR fluctuation to high-layer performance.

Channel fluctuation due to limited antenna FoV and UE rotation shall be considered in mmWave scheduling. Fortunately, future UE orientation can be predicted according to current angular velocity or angular acceleration measured by motion sensors \cite{6906608,9099976} or integrated sensing and communications (ISAC) techniques \cite{9737357}. Moreover, due to the small-scale fading, it might not be efficient to determine all the transmission parameters in a large time-scale at its beginning as in \cite{9184011, 9171304, 9246715}. Instead, dynamic programming might be a better framework to address the predictive scheduling algorithm design with random channel fading. In this paper, we consider the downlink scheduling with UE rotation where the angular velocity can be measured by motion sensors.

In this paper, we would like to shed some light on the predictive transmission scheduling in mmWave systems with channel fluctuation due to UE rotation and limited antenna FoV. We utilize the motion sensors embedded in UEs to track the orientation of phased arrays. Moreover, by exploiting temporal correlations of mmWave channel, the non-stationary channel statistics can be predicted. We formulate the delay-aware transmission scheduling with non-stationary mmWave channel statistics as a finite-horizon MDP. Finally, a low-complexity approximate MDP solution framework, applying one iteration step over analytically approximated value function, is proposed to reduce the computational complexity. To the best of our knowledge, this is the first paper on mmWave queue-aware scheduling in a large time-scale with consideration of UE rotation.

\section{System Model}
\label{sec:system_model}
\subsection{mmWave System with UE Rotation}
We consider the downlink transmission in an mmWave communication system with one BS and $K$ UEs, where the set of UEs is denoted by $\mathcal{K}\!\triangleq\!\{1,2,\ldots,K\}$. The analog MIMO transceiver with a single radio frequency (RF) chain and a half-wavelength uniform linear phased array (ULA) is adopted at both the BS and UEs. The linear phased arrays at the BS and UEs are with $N_{\mathrm{T}}$ and $N_{\mathrm{R}}$ antenna elements, respectively. Hence, analog precoder and combiners can be adopted at the BS and UEs respectively to enhance the receiving SNR. For elaboration convenience, we consider the mmWave communication in a two-dimensional plane as illustrated in Fig. \ref{fig:system_model}(a).

We focus on mmWave communication scenarios for UEs with rotation but no translation, e.g., playing games with virtual reality (VR) headsets or watching videos when sitting on a rotating chair. Due to UE rotation and limited antenna FoVs, channel statistics such as the number of in-FoV signal scatterers and their directions with respect to the phased arrays will change over time. Fortunately, embedded motion sensors such as magnetometers and gyroscopes are able to periodically detect the orientation and rotation of UEs, respectively \cite{8422671}. Moreover, future UE orientations can be predicted by current angular velocity \cite{6906608}. With the direction knowledge of scatterers and phased arrays, the channel statistics in the future can also be predicted to assist predictive scheduling by jointly considering current and future transmission costs.

Specifically, the transmission time is organized by frames, and the wireless channel is assumed to be quasi-static within one frame. UEs are rotating with predictable angular velocities in a number of frames. The period during which all the UEs are rotating with predictable angular velocities is referred to as one {\textit{scheduling period}} consisting of $T$ frames. For the elaboration convenience, we assume that all UEs are predicted to rotate with constant angular velocity as in \cite{9099976,8277251} in a scheduling period. We shall focus on the joint UE selection and power allocation within one scheduling period.

In the considered scheduling period, the angular velocity of the $k$-th UE is denoted as $\omega_{k}$. The boresight direction of the $k$-th UE in the $1$-st frame of the scheduling period is denoted as $\mathbf{n}_{1,k}$, and the boresight direction of the BS's array is denoted as $\mathbf{n}_{BS}$. Then the rotation angle during the $(t\!-\!1)$ frames and the boresight direction of the $k$-th UE in the $t$-th frame can be predicted as $
    \Delta\phi_{t,k}
    \!\triangleq\!
    (t\!-\!1)\omega_{k}T_{\mathrm{F}}
$ and $
    \mathbf{n}_{t,k}
    \!=\!
    \left[
        \begin{matrix}
            \cos\left(\Delta\phi_{t,k}\right) & \!-\!\sin\left(\Delta\phi_{t,k}\right) \\
            \sin\left(\Delta\phi_{t,k}\right) & \quad\cos\left(\Delta\phi_{t,k}\right)
        \end{matrix}
        \right]
    \mathbf{n}_{1,k}
$, respectively, where $t\!\leq\!T$ and $T_{\mathrm{F}}$ denotes the frame duration.

\begin{figure}[tb]
    \centering
    \subfloat[]{
        \includegraphics[width=0.8\linewidth]{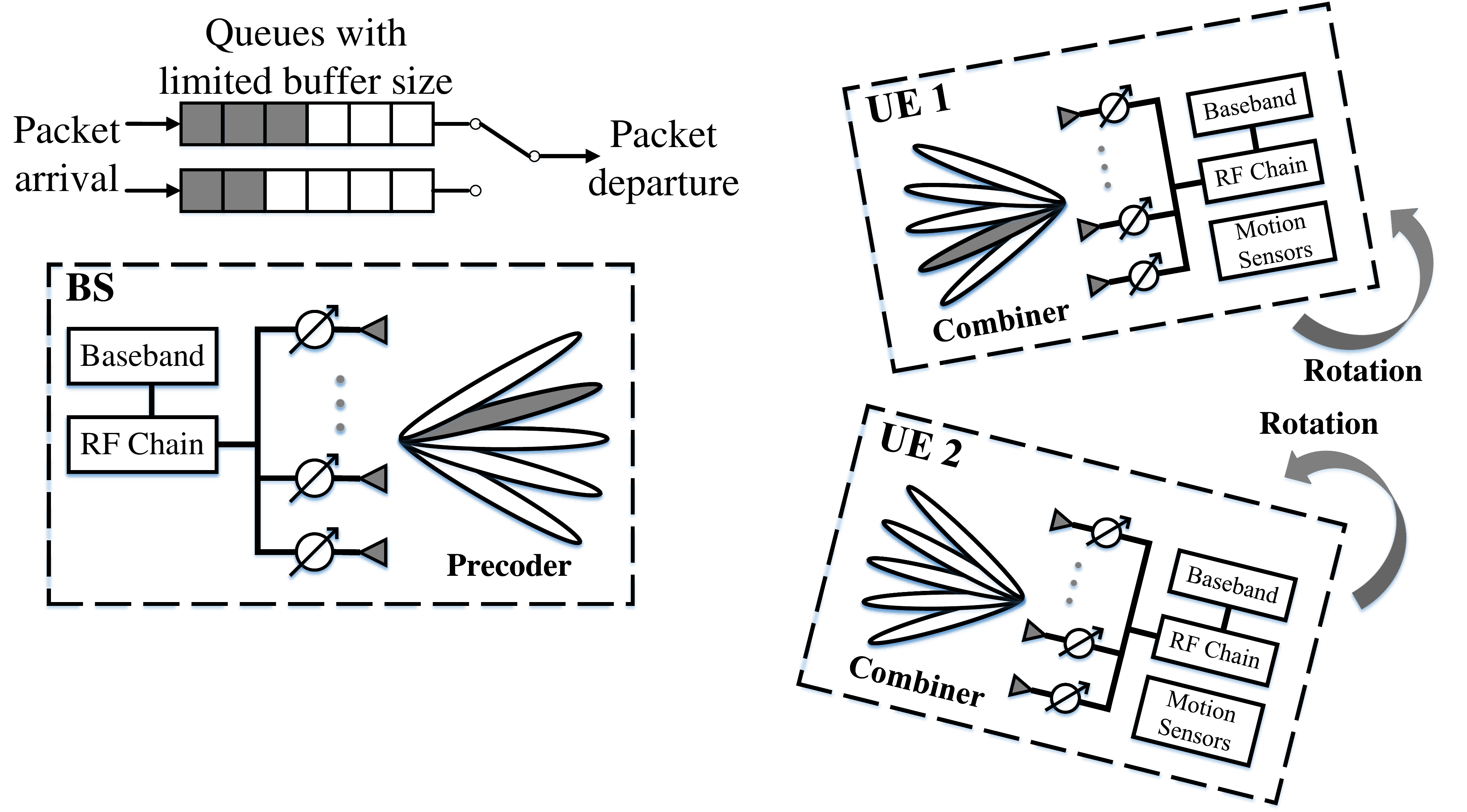}
    }
    \vspace{-0.3cm}
    \hfill
    \subfloat[]{
        \includegraphics[width=0.6\linewidth]{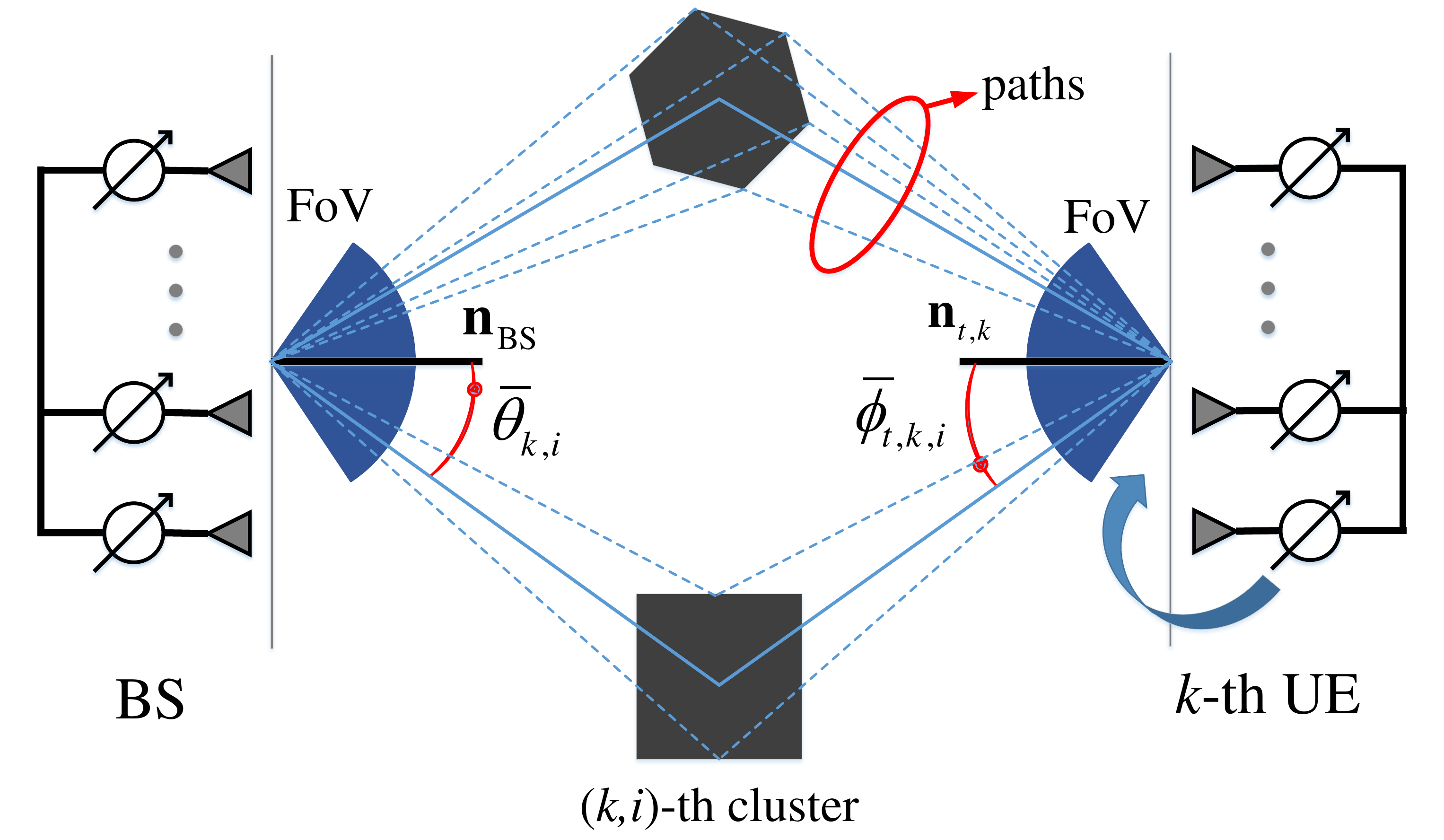}
    }
    \caption{Illustration of (a) system model and (b) channel model.}
    \label{fig:system_model}
    \vspace{-0.5cm}
\end{figure}

\subsection{Channel Model}
\label{subsec:channel_model}
The extended Saleh-Valenzuela channel model illustrated in Fig. \ref{fig:system_model}(b) is adopted. Specifically, there are $N_{k}^{\mathrm{cl}}$ quasi-static scattering clusters in the propagation channel from the BS to the $k$-th UE, and $N_{k,i}^{\mathrm{ray}}$ propagation paths in the $i$-th cluster. For the exposition convenience, the $i$-th cluster in the channel between the BS and the $k$-th UE is referred to as the $(k,i)$-th cluster, and the $\ell$-th path of the $(k,i)$-th cluster is referred to as the $(k,i,\ell)$-th path. Denoting the AoA and AoD of the $(k,i,\ell)$-th path in the $t$-th frame as $\phi_{t,k,i,\ell}$ and $\theta_{t,k,i,\ell}$ respectively, the channel matrix from the BS to the $k$-th UE in the $t$-th frame $
    \mathbf{H}_{t,k}
    \!\in\!
    \mathbb{C}^{N_{\mathrm{R}}\!\times\!N_{\mathrm{T}}}
$ can be represented by
\begin{shrinkeq}{-1ex}
    \begin{align}
        \label{eqn:channel_model}
        \mathbf{H}_{t,k}
        =
         &
        \textstyle{
        \sum_{i=1}^{N_{k}^{\mathrm{cl}}}\sum_{\ell=1}^{N_{k,i}^{\mathrm{ray}}}\alpha_{t,k,i,\ell}\mathbf{a}_{\mathrm{R}}(\phi_{t,k,i,\ell})\mathbf{a}_{\mathrm{T}}^{\mathsf{H}}(\theta_{t,k,i,\ell})
        } \nonumber \\
         &
        \times\Lambda_{\mathrm{R}}(\phi_{t,k,i,\ell})\Lambda_{\mathrm{T}}(\theta_{t,k,i,\ell}),
    \end{align}
\end{shrinkeq}
where $\alpha_{t,k,i,\ell}$ is the instantaneous complex gain of the $(k,i,\ell)$-th path in the $t$-th frame, $\Lambda_{\mathrm{R}}(\phi_{t,k,i,\ell})$ and $\Lambda_{\mathrm{T}}(\theta_{t,k,i,\ell})$ refer to the receiving and transmission antenna gains at $\phi_{t,k,i,\ell} $ and $\theta_{t,k,i,\ell}$ respectively, $\mathbf{a}_{\mathrm{R}}$ and $\mathbf{a}_{\mathrm{T}}$ represent the array response vectors of the ULAs at UEs and the BS, which can be expressed by $
    \mathbf{a}_{\mathrm{R}}(\phi)
    \!=\!
    \frac{1}{\sqrt{N_{\mathrm{R}}}}\left[1,e^{-j\pi\sin(\phi)},\ldots,e^{-j\pi(N_{\mathrm{R}}-1)\sin(\phi)}\right]^{\mathsf{T}},
$ and $
    \mathbf{a}_{\mathrm{T}}(\theta)
    \!=\!
    \frac{1}{\sqrt{N_{\mathrm{T}}}}\left[1,e^{-j\pi\sin(\theta)},\ldots,e^{-j\pi(N_{\mathrm{T}}-1)\sin(\theta)}\right]^{\mathsf{T}}
$, respectively. The antenna patterns are modeled as identical and ideal sectored, hence,
\begin{shrinkeq}{-1ex}
    \begin{align*}
        \label{eqn:element_pattern}
        \Lambda_{\!\mathrm{R}\!}(\phi)
        \!=\!
        \begin{cases}
            1 & \phi\!\in\![\phi_{\mathrm{min}},\phi_{\mathrm{max}}] \\
            0 & \mathrm{otherwise}
        \end{cases}
        ,
        \Lambda_{\!\mathrm{T}\!}(\theta)
        \!=\!
        \begin{cases}
            1 & \theta\!\in\![\theta_{\mathrm{min}},\theta_{\mathrm{max}}] \\
            0 & \mathrm{otherwise}
        \end{cases}
        .
    \end{align*}
\end{shrinkeq}
Due to the limited FoVs of receiving antennas, i.e., $\Lambda_{\mathrm{R}}$, the phased array can not capture all propagation paths in $360^{\circ}$ azimuth. Thus, the in-FoV scattering clusters may vary during UE rotation. For example, the paths via the scattering cluster represented by black square in Fig. \ref{fig:system_model}(b) may be out of the FoV in a few frames when the $k$-th UE is rotating clockwise.

The instantaneous gains $\{\alpha_{t,k,i,\ell}\}$, AoAs $\{\phi_{t,k,i,\ell}\} $ and AoDs $\{\theta_{t,k,i,\ell}\}$ are drawn from independent distributions in each frame. Specifically, $\alpha_{t,k,i,\ell}$ follows the complex Gaussian distribution with zero mean and variance $\sigma_{\alpha;k,i}^{2}$. The angular distribution of propagation paths $\phi_{t,k,i,\ell}$ and $\theta_{k,i,\ell}$ follow the truncated Laplacian distributions with cluster means $
    \bar{\phi}_{t,k,i}
$ and $
    \bar{\theta}_{k,i}
$, and variances $
    \sigma_{\phi;k,i}^{2}
$ and $
    \sigma_{\theta;k,i}^{2}
$ respectively as in \cite{4277071}. Due to UE rotation, the boresight directions of their phased arrays change in each frame. Since the scattering clusters are quasi-static, the AoA means of the $(k,i)$-th cluster in one scheduling period satisfy the following temporal correlation,
\begin{shrinkeq}{-1ex}
    \begin{equation}
        \label{eqn:rotation}
        \bar{\phi}_{t,k,i}
        =
        \bar{\phi}_{1,k,i}
        +
        \Delta\phi_{t,k}, \forall t,k,i.
    \end{equation}
\end{shrinkeq}
As a result, the channel statistics (distribution of $\mathbf{H}_{t,k}$) are non-stationary in one scheduling period, due to the limited FoVs at the UEs and time-varying AoA cluster means $\bar{\phi}_{t,k,i}$. However, with the AoA cluster means and the angular velocity of the $k$-th UE in the $1$-st frame, i.e., $\{\bar{\phi}_{1,k,i}|\forall i\}$ and $\omega_{k}$, the AoA cluster means of all clusters in the $t$-th frame can be predicted according to \eqref{eqn:rotation}.

For the elaboration convenience, we define the statistical channel state information (SCSI) as the tuple of parameters sufficiently characterizing the distribution of channel matrix in \eqref{eqn:channel_model} as follows.
\begin{Definition}
    The SCSI in $t$-th frame is defined by $
        \mathcal{I}_{t}^{\mathrm{SCSI}}
        \!\triangleq\!\\
        (\mathcal{I}_{\mathrm{sta}}^{\mathrm{SCSI}}\!,\!
        \{\bar{\phi}_{t,k,i}|\forall k,i\})
    $, where $
        \mathcal{I}_{\mathrm{sta}}^{\mathrm{SCSI}}
        \!\triangleq\!
        (\{N_{k}^{\mathrm{cl}}|\forall k\}\!,\!
        \{N_{k,i}^{\mathrm{ray}}|\forall k,i\}\!,\!\\
        \{\sigma_{\alpha;k,i}^{2}|\forall k,i\}\!,\!
        \{\bar{\theta}_{k,i}|\forall k,i\}\!,\!
        \{\sigma_{\theta;k,i}^{2}|\forall k,i\}\!,\!
        \{\sigma_{\phi;k,i}^{2}|\forall k,i\})
    $ is the tuple of quasi-static channel statistical parameters.
\end{Definition}

With SCSI in the $1$-st frame $\mathcal{I}_{1}^{\mathrm{SCSI}}$ and $\{\omega_{k}\}$ measured in the $1$-st frame, the SCSI and thus the distribution of $\mathbf{H}_{t,k}$ of all the frames within the scheduling period can be predicted according to the statistical channel model in \eqref{eqn:channel_model}.

\subsection{Sensing-based Beam Alignment}
\label{ssec:beam_alignment}
Due to the single RF chain at the BS, one UE is selected for downlink transmission in each frame. Let $
    \mathbf{w}_{t,k}
    \!\in\!
    \mathbb{C}^{N_{\mathrm{R}}\!\times\!1}
$ and $
    \mathbf{f}_{t,k}
    \!\in\!
    \mathbb{C}^{N_{\mathrm{T}}\!\times\!1}
$ be the analog combiner at the $k$-th UE and the analog precoder at the BS in the $t$-th frame respectively if the $k$-th UE is selected. In practice, $\mathbf{w}_{t,k}$ and $\mathbf{f}_{t,k}$ are selected from the pre-defined finite codebooks $
    \mathbf{w}_{t,k}
    \!\in\!
    \mathcal{W}
    \!\triangleq\!
    \big\{\mathbf{a}_{\mathrm{R}}(\phi_{q})\big|q\!=\!1,2,\ldots,N_{\mathrm{R}}\big\}
$ and $
    \mathbf{f}_{t,k}
    \!\in\!
    \mathcal{F}
    \!\triangleq\!
    \big\{\mathbf{a}_{\mathrm{T}}(\theta_{p})\big|p\!=\!1,2,\ldots,N_{\mathrm{T}}\big\}
$, respectively, where $
    \phi_{q}\!=\!\arcsin\big(\frac{2(q\!-\!1)}{N_{\mathrm{R}}}\!-\!1\big)
$ and $
    \theta_{p}\!=\!\arcsin\big(\frac{2(p\!-\!1)}{N_{\mathrm{T}}}\!-\!1\big)
$. Hence, the spectral efficiency achieved by the $k$-th UE in the $t$-th frame is given by $
    R_{t,k}
    \!=\!
    \log_{2}\big(1
    \!+\!
    \frac{P_{t,k}Y_{t,k}}{N_{0}W}\big)
$, where $P_{t,k}$ is the transmission power of the BS, $
    Y_{t,k}
    \!\triangleq\!
    \big|\mathbf{w}_{t,k}^{\mathsf{H}}\mathbf{H}_{t,k}\mathbf{f}_{t,k}\big|^{2}
$ is the channel power gain in baseband, $N_{0}$ is the noise power spectral density, and $W$ is the bandwidth.

Instead of beamforming with instantaneous CSI feedback which raises significant overhead, we exploit the SCSI prediction and adopt the following statistical beam alignment scheme, which maximizes the average baseband SNR without instantaneous CSI feedback.
\begin{Scheme}[SCSI-Based Beam Alignment]
    \label{scheme:beamforming}
    Given $\mathcal{I}^{\mathrm{SCSI}}_{1}$ and $\{\omega_{k}\}$, the analog combiner and precoder for the $k$-th UE in the $t$-th frame are selected by $
        \mathbf{w}_{t,k}
        \!=\!
        \mathbf{a}_{\mathrm{R}}(\phi_{q_{t}^{\dagger}})
    $ and $
        \mathbf{f}_{t,k}
        \!=\!
        \mathbf{a}_{\mathrm{T}}(\theta_{p_{t}^{\dagger}})
    $, respectively, where $(q_{t}^{\dagger},p_{t}^{\dagger})$ is given by \eqref{eqn:scheme_1}.
\end{Scheme}

\begin{figure*}
    \vspace{0.06in}
    \begin{shrinkeq}{-1ex}
        \begin{align}
            \label{eqn:scheme_1}
            {\textstyle{
            (q_{t}^{\dagger},p_{t}^{\dagger})
            =
            \mathop{\arg\max}\nolimits_{q,p}
            \mathbb{E}_{\mathbf{H}_{t,k}}\left[\left|\mathbf{a}_{\mathrm{R}}^{\mathsf{H}}\left(\phi_{q}\right)\mathbf{H}_{t,k}\mathbf{a}_{\mathrm{T}}\left(\theta_{p}\right)\right|^{2}\right]\qquad\qquad\qquad\qquad\qquad\qquad\qquad\qquad\quad\nonumber
            }}
            \\
            {\textstyle{
            =
            \mathop{\arg\max}\nolimits_{q,p}
            \sum\nolimits_{i=1}^{N_{k}^{\mathrm{cl}}}N_{k,i}^{\mathrm{ray}}\sigma^{2}_{\alpha;k,i}
            \big\{\int_{\phi_{\mathrm{min}}}^{\phi_{\mathrm{max}}}
            \left|\mathbf{a}^{\mathsf{H}}_{\mathrm{R}}\left(\phi_{q}\right)\mathbf{a}_{\mathrm{R}}(\phi_{t,k,i,1})\right|^{2}
            f_{\phi;k,i}(\phi_{t,k,i,1})\mathrm{d}\phi_{t,k,i,1}\big\} \nonumber
            }}
            \\
            {\textstyle{
            \times\big\{\int_{\theta_{\mathrm{min}}}^{\theta_{\mathrm{max}}}
            \left|\mathbf{a}_{\mathrm{T}}^{\mathsf{H}}(\theta_{t,k,i,1})\mathbf{a}_{\mathrm{T}}\left(\theta_{p}\right)\right|^{2}
            f_{\theta;k,i}(\theta_{t,k,i,1})\mathrm{d}\theta_{t,k,i,1}\big\}.\qquad\qquad\qquad\qquad\qquad\qquad
            }}
            \\
            \label{eqn:CDF_Y}
            {\textstyle{
            F_{Y_{t,k}}(x)
            \triangleq\Pr[Y_{t,k}\leq x]=\mathbb{E}_{\{\Phi_{t,k,i}|\forall i\}}\big\{1-\exp(-x/{\textstyle{\sum\nolimits_{i=1}^{N_{k}^{\mathrm{cl}}}}}\Phi_{t,k,i}\sigma_{\alpha;k,i}^{2})\big\}
            ,\ x>0,\qquad\qquad
            }}
            \\
            \label{eqn:binomial}
            {\textstyle{
                    \Phi_{t,k,i}
                    \!\sim\!
                    \mathtt{Binomial}\big(N_{k,i}^{\mathrm{ray}},\Pr\big[\phi_{t,k,i,\ell}\!\in\!\mathcal{P}_{\mathrm{R};q_{t}^{\dagger}}\!\cap\![\phi_{\mathrm{min}},\phi_{\mathrm{max}}]\big]\Pr\big[\theta_{t,k,i,\ell}\!\in\!\mathcal{P}_{\mathrm{T};p_{t}^{\dagger}}\!\cap\![\theta_{\mathrm{min}},\theta_{\mathrm{max}}]\big]\big)
                }}
        \end{align}
    \end{shrinkeq}
    \hrulefill
    \vspace{-1.5\baselineskip}
\end{figure*}

Since the integrals in \eqref{eqn:scheme_1} depends on SCSI in the $t$-th frame $\mathcal{I}_{t}^{\mathrm{SCSI}}$ which can be predicted from $\mathcal{I}_{1}^{\mathrm{SCSI}}$ and $\{\omega_{k}\}$, the precoders and combiners for all the frames within one scheduling period can be pre-designed in the beginning of $1$-st frame. Note that different from the statistical beamforming proposed in \cite{SBF}, we consider non-stationary AoA/AoD cluster means and limited FoVs of ULA. Therefore, beam may switch towards another cluster in advance when previously steered cluster becomes out of FoV due to UE rotation. Moreover, given Scheme \ref{scheme:beamforming}, the cumulative distribution function (CDF) of the baseband channel power gain can be derived as follows.
\begin{Lemma}[CDF of $Y_{t,k}$]
    \label{lem:CDF_Y}
    With Scheme \ref{scheme:beamforming}, when $N_{\mathrm{R}}$ and $N_{\mathrm{T}}$ are sufficiently large, the CDF of $Y_{t,k}$ is given by \eqref{eqn:CDF_Y}, where $\Phi_{t,k,i}$ follows the binomial distribution given by \eqref{eqn:binomial}, and
    \begin{shrinkeq}{-1ex}
        \begin{align}
             & {\textstyle{
            \mathcal{P}_{\mathrm{R};q}
            =\big\{\phi\big|\big|\sin(\phi)-\frac{2(q-1)-N_{\mathrm{R}}}{N_{\mathrm{R}}}\big|\leq\frac{1}{N_{\mathrm{R}}}\big\}
            }},
            \\
             & {\textstyle{
            \mathcal{P}_{\mathrm{T};p}
            =\big\{\theta\big|\big|\sin(\theta)-\frac{2(p-1)-N_{\mathrm{T}}}{N_{\mathrm{T}}}\big|\leq\frac{1}{N_{\mathrm{T}}}\big\}
            }}.
        \end{align}
    \end{shrinkeq}
\end{Lemma}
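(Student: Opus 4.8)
The plan is to first substitute the channel model \eqref{eqn:channel_model} and the beam choices of Scheme~\ref{scheme:beamforming} into $Y_{t,k}=|\mathbf{w}_{t,k}^{\mathsf{H}}\mathbf{H}_{t,k}\mathbf{f}_{t,k}|^{2}$. Writing $g_{i,\ell}^{\mathrm{R}}\triangleq\mathbf{a}_{\mathrm{R}}^{\mathsf{H}}(\phi_{q_{t}^{\dagger}})\mathbf{a}_{\mathrm{R}}(\phi_{t,k,i,\ell})$ and $g_{i,\ell}^{\mathrm{T}}\triangleq\mathbf{a}_{\mathrm{T}}^{\mathsf{H}}(\theta_{t,k,i,\ell})\mathbf{a}_{\mathrm{T}}(\theta_{p_{t}^{\dagger}})$ for the normalized ULA inner products, this yields $Y_{t,k}=\big|\sum_{i=1}^{N_{k}^{\mathrm{cl}}}\sum_{\ell=1}^{N_{k,i}^{\mathrm{ray}}}\alpha_{t,k,i,\ell}\,g_{i,\ell}^{\mathrm{R}}\,g_{i,\ell}^{\mathrm{T}}\,\Lambda_{\mathrm{R}}(\phi_{t,k,i,\ell})\Lambda_{\mathrm{T}}(\theta_{t,k,i,\ell})\big|^{2}$. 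Since $(q_{t}^{\dagger},p_{t}^{\dagger})$ is a deterministic function of the predicted SCSI (Scheme~\ref{scheme:beamforming}), it is a fixed index here and the only randomness left is in the mutually independent families $\{\alpha_{t,k,i,\ell}\}$, $\{\phi_{t,k,i,\ell}\}$ and $\{\theta_{t,k,i,\ell}\}$.

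The second step is to pass to the large-array regime for the array gains. Here $|g_{i,\ell}^{\mathrm{R}}|^{2}=\frac{1}{N_{\mathrm{R}}^{2}}\big|\sum_{n=0}^{N_{\mathrm{R}}-1}e^{-jn\pi(\sin\phi_{t,k,i,\ell}-\sin\phi_{q_{t}^{\dagger}})}\big|^{2}$ is a Fej\'er-type kernel; since the codebook $\{\mathbf{a}_{\mathrm{R}}(\phi_{q})\}_{q=1}^{N_{\mathrm{R}}}$ is an orthonormal beamspace basis partitioning the angular (sine) domain into $N_{\mathrm{R}}$ resolution cells, for large $N_{\mathrm{R}}$ this kernel is well approximated by the indicator $\mathbf{1}\{\phi_{t,k,i,\ell}\in\mathcal{P}_{\mathrm{R};q_{t}^{\dagger}}\}$ of the cell of $\phi_{q_{t}^{\dagger}}$ (the cell-boundary angles form a null set under the continuous truncated Laplacian densities and can be discarded). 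Combining this with the analogous approximation of $|g_{i,\ell}^{\mathrm{T}}|^{2}$ by $\mathbf{1}\{\theta_{t,k,i,\ell}\in\mathcal{P}_{\mathrm{T};p_{t}^{\dagger}}\}$, and with the ideal-sectored gains $\Lambda_{\mathrm{R}},\Lambda_{\mathrm{T}}$, which are exactly zero whenever the corresponding angle leaves $[\phi_{\mathrm{min}},\phi_{\mathrm{max}}]$ or $[\theta_{\mathrm{min}},\theta_{\mathrm{max}}]$, the coefficient of $\alpha_{t,k,i,\ell}$ collapses to $\mathbf{1}\{\phi_{t,k,i,\ell}\in\mathcal{P}_{\mathrm{R};q_{t}^{\dagger}}\cap[\phi_{\mathrm{min}},\phi_{\mathrm{max}}]\}\,\mathbf{1}\{\theta_{t,k,i,\ell}\in\mathcal{P}_{\mathrm{T};p_{t}^{\dagger}}\cap[\theta_{\mathrm{min}},\theta_{\mathrm{max}}]\}$; as there are only finitely many paths, the cross terms between a ``captured'' path (coefficient $\to 1$) and a non-captured one ($\to 0$) are negligible in the same regime. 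Thus $Y_{t,k}$ reduces to $\big|\sum_{(i,\ell)\in\mathcal{S}}\alpha_{t,k,i,\ell}\big|^{2}$, where $\mathcal{S}$ is the angle-determined set of captured paths and $\Phi_{t,k,i}\triangleq|\{\ell:(i,\ell)\in\mathcal{S}\}|$ is its count in the $(k,i)$-th cluster.

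The third step conditions on all AoAs and AoDs --- equivalently on $\mathcal{S}$, and the conditional law below depends on $\mathcal{S}$ only through the counts $\{\Phi_{t,k,i}\}$. Because the path gains are independent of the angles and $\alpha_{t,k,i,\ell}\sim\mathcal{CN}(0,\sigma_{\alpha;k,i}^{2})$ independently, $\sum_{(i,\ell)\in\mathcal{S}}\alpha_{t,k,i,\ell}$ is conditionally circularly-symmetric complex Gaussian with zero mean and variance $\sum_{i}\Phi_{t,k,i}\sigma_{\alpha;k,i}^{2}$, so $Y_{t,k}$ is conditionally exponential: $\Pr[Y_{t,k}\le x\mid\{\Phi_{t,k,i}\}]=1-\exp\!\big(-x/\sum_{i}\Phi_{t,k,i}\sigma_{\alpha;k,i}^{2}\big)$ for $x>0$ (with the usual convention when the variance vanishes). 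Averaging over $\{\Phi_{t,k,i}\}$ gives \eqref{eqn:CDF_Y}. Finally, to obtain \eqref{eqn:binomial}: within the $(k,i)$-th cluster the $N_{k,i}^{\mathrm{ray}}$ pairs $(\phi_{t,k,i,\ell},\theta_{t,k,i,\ell})$ are i.i.d.\ over $\ell$ with $\phi_{t,k,i,\ell}$ independent of $\theta_{t,k,i,\ell}$, so each path is captured independently with probability $\Pr[\phi_{t,k,i,\ell}\in\mathcal{P}_{\mathrm{R};q_{t}^{\dagger}}\cap[\phi_{\mathrm{min}},\phi_{\mathrm{max}}]]\cdot\Pr[\theta_{t,k,i,\ell}\in\mathcal{P}_{\mathrm{T};p_{t}^{\dagger}}\cap[\theta_{\mathrm{min}},\theta_{\mathrm{max}}]]$, whence $\Phi_{t,k,i}$ is binomial as claimed; independence of $\Phi_{t,k,i}$ across $i$ follows from the independence of the clusters' parameters.

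The step I expect to be the main obstacle is the second one: rigorously justifying the replacement of the Fej\'er-kernel array gains $|g_{i,\ell}^{\mathrm{R}}|^{2},|g_{i,\ell}^{\mathrm{T}}|^{2}$ by hard cell indicators and controlling the finitely many cross terms, so that $Y_{t,k}$ genuinely reduces to $|\sum_{(i,\ell)\in\mathcal{S}}\alpha_{t,k,i,\ell}|^{2}$ --- this is precisely what the hypothesis ``$N_{\mathrm{R}}$ and $N_{\mathrm{T}}$ sufficiently large'' is meant to absorb. Once that reduction is granted, the conditional-exponential computation and the binomial counting are routine.
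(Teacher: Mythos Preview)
Your proposal is correct and follows essentially the same approach as the paper's own proof: both use the large-array regime to collapse the ULA inner products $|\mathbf{a}_{\mathrm{R}}^{\mathsf{H}}(\phi_{q_t^{\dagger}})\mathbf{a}_{\mathrm{R}}(\phi_{t,k,i,\ell})|^{2}$ and $|\mathbf{a}_{\mathrm{T}}^{\mathsf{H}}(\theta_{p_t^{\dagger}})\mathbf{a}_{\mathrm{T}}(\theta_{t,k,i,\ell})|^{2}$ to $\{0,1\}$-valued indicators, count the captured paths per cluster to obtain the binomial law for $\Phi_{t,k,i}$, and then observe that conditioned on these counts the baseband gain is the squared magnitude of a zero-mean complex Gaussian, hence exponential (the paper phrases this last step as a chi-squared with two degrees of freedom, which is the same thing). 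Your write-up is more explicit about the Fej\'er-kernel justification and the handling of cross terms, but the argument is the same in substance.
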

\begin{proof}
    Please refer to Appendix \ref{proof:CDF_of_Y}.
\end{proof}

\subsection{System Queue Dynamics}
There are $K$ queues for downlink transmission at the BS, each for one UE. The arrival data of each queue is organized by packets, each with $B$ information bits. It is assumed that the number of arrival packets at the $k$-th UE in the $t$-th frame, denoted as $A_{t,k}$, follows independent Poisson distribution with expectation $\lambda_{k}$ across UEs as in \cite{8653854}, i.e., $
    \Pr[A_{t,k}\!=\!n]
    \!=\!
    (\lambda_{k}^{n}/n!)e^{\!-\!\lambda_{k}}
$. Let $
    \mathcal{A}_{t}
    \!\triangleq\!
    \{A_{t,k}|\forall k\!\in\!\mathcal{K}\}
$ represent the aggregated packet arrivals in the $t$-th frame. Without loss of generality, it is assumed that all packets arrive at the end of each frame.

Suppose the $d_{t}$-th UE is selected in the $t$-th frame, the number of departure packets from the $d_{t}$-th queue in the $t$-th frame is given by $
    D_{t,d_{t}}
    \!=\!
    \lfloor WR_{t,d_{t}}T_{\mathrm{F}}/B\rfloor
$. Denote $Q_{t,k}$ as the queue length of the $k$-th queue at the beginning of $t$-th frame and $Q_{\mathrm{max}}$ as the buffer size for each queue both in terms of packets. The queue dynamics can be expressed as $
    Q_{t\!+\!1,k}
    \!=\!
    \min\{Q_{t,k}^{\mathrm{D}}\!+\!A_{t,k},Q_{\mathrm{max}}\}
$, where the arrival packets will be discarded if the buffer is full. The post-decision queue length $Q_{t,k}^{\mathrm{D}}$ is defined as
\begin{shrinkeq}{-1ex}
    \begin{align}
        \label{eqn:Q_tkD}
        Q_{t,k}^{\mathrm{D}}=
        \begin{cases}
            (Q_{t,k}-D_{t,k})^{+} & k=d_{t},     \\
            Q_{t,k}               & k\neq d_{t},
        \end{cases}
    \end{align}
\end{shrinkeq}
where $
    (\cdot)^{+}
    \!\triangleq\!
    \max(0,\cdot)
$.

\section{Problem Formulation}
\label{sec:problem}
Given the precoder and combiner design in Scheme 1, we shall formulate the UE selection and power allocation for all the frames in one scheduling period as a finite-horizon MDP. Note that this is because the system adopts the same prediction for UE rotations in all the frames of one scheduling period. In order to facilitate the MDP formulation, the system state, scheduling action and policy, and post-decision scheduling policy are first elaborated as follows.

\begin{Definition}[System State]
    At the beginning of the $t$-th frame, the system state is represented by $
        \mathcal{S}_{t}
        \!\triangleq\!
        (\mathcal{Q}_{t},\mathcal{Y}_{t})
    $, consisting of queuing state information (QSI) of all the UEs $
        \mathcal{Q}_{t}
        \!\triangleq\!
        \{Q_{t,1},Q_{t,2},\ldots,Q_{t,K}\}
    $, and baseband channel power gains to all the UEs $
        \mathcal{Y}_{t}
        \!\triangleq\!
        \{Y_{t,1},Y_{t,2},\ldots,Y_{t,K}\}$.
\end{Definition}

\begin{Definition}[Scheduling Action and Policy]
    At the beginning of the $t$-th frame, the scheduling actions include the UE selection for downlink transmission $
        d_{t}
        \!\in\!
        \mathcal{K}
    $ and the downlink transmission power $
        P_{t}
        \!\triangleq\!
        P_{t,d_{t}}
    $, where the following instantaneous power constraint at the BS should be satisfied,
    \begin{shrinkeq}{-1ex}
        \begin{equation}
            \label{constraint:power}
            P_{t}
            \leq
            P_{\mathrm{max}},
            \ \forall t.
        \end{equation}
    \end{shrinkeq}
    Hence, the scheduling policy of the BS, denoted as $\Omega_{t}$, is a mapping from the system state $\mathcal{S}_{t}$ to the scheduling actions. Thus, $
        \Omega_{t}(\mathcal{S}_{t})
        \!\triangleq\!
        (d_{t},P_{t})
    $.
\end{Definition}

\begin{Definition}[Post-Decision State]
    \label{def:Q_tD}
    At the beginning of the $t$-th frame, the post-decision system state is defined by $\mathcal{S}_{t}^{\mathrm{D}}\triangleq (\mathcal{Q}_{t}^{\mathrm{D}},\mathcal{Y}_{t})$, where $
        \mathcal{Q}_{t}^{\mathrm{D}}
        \!\triangleq\!
        \{Q_{t,1}^{\mathrm{D}},Q_{t,2}^{\mathrm{D}},\ldots,Q_{t,K}^{\mathrm{D}}\}
    $ denotes post-decision QSI of all the UEs.
\end{Definition}

The \textit{post-decision system state} is the system state after packet transmission but before packet arrivals. Given the above definition of post-decision system state and scheduling policy, the transition probability is given by $
    \Pr(\mathcal{S}_{t\!+\!1}^{\mathrm{D}}|\mathcal{S}_{t}^{\mathrm{D}},\Omega_{t\!+\!1})
    \!=\!
    \Pr(\mathcal{Q}_{t\!+\!1}^{\mathrm{D}}|\mathcal{Q}_{t}^{\mathrm{D}},\Omega_{t\!+\!1})
    \Pr(\mathcal{Y}_{t\!+\!1})
    \!=\!
    \prod_{k\!=\!1}^{K}\Pr(Q_{t\!+\!1,k}^{\mathrm{D}}|Q_{t,k}^{\mathrm{D}},\Omega_{t\!+\!1,k})
    \prod_{k\!=\!1}^{K}\Pr(Y_{t\!+\!1,k})
$.

In this paper, the scheduling policies are designed to optimize the system queuing performance, while saving the average total energy consumption. Specifically, the transmission energy, average packet transmission delay and penalty of packet drop are considered as the costs of scheduling. According to Little's law, the sum of queuing packet numbers of all the frames in the scheduling period can be used as an equivalent measurement of average transmission delay. Hence, we define the following weighted sum of the transmission power consumption, the number of queuing packets for all UE and full buffer penalty as the system cost in the $t$-th frame,
\begin{shrinkeq}{-1ex}
    \begin{align*}
        {\textstyle{
        g_{t}\big(\mathcal{S}_{t},\Omega_{t}(\mathcal{S}_{t})\big)
        \!\triangleq\!
        w_{\mathrm{P}}P_{t}
        \!+\!
        \sum_{k\in\mathcal{K}}\big(Q_{t,k}
        \!+\!
        w_{\mathrm{Q}}\mathbb{I}[Q_{t,k}\!=\!{Q}_{\mathrm{max}}]\big),
        }}
    \end{align*}
\end{shrinkeq}
where $w_{\mathrm{P}}$ and $w_{\mathrm{Q}}$ are the weights of power consumption and full buffer penalty respectively, and $\mathbb{I}[\cdot]$ denotes an indicator function, which is $1$ when the event is true and $0$ otherwise.

The overall minimization objective of one scheduling period with the initial system state $\mathcal{S}_{1}$ is then given by
\begin{shrinkeq}{-1ex}
    \begin{align*}
        \textstyle{
        G(\mathcal{S}_{1},\Omega)
        \!\triangleq\!
        \mathbb{E}_{\mathcal{A},\mathcal{Y}}^{\Omega}\big[\sum_{t\!=\!1}^{T}g_{t}\big(\mathcal{S}_{t},\Omega_{t}(\mathcal{S}_{t})\big)
        \!+\!
        {\varrho}\big(\mathcal{Q}_{T+1}\big)\big|\mathcal{S}_{1}\big],
        }
    \end{align*}
\end{shrinkeq}
where $
    \mathcal{A}
    \!\triangleq\!
    \{\mathcal{A}_{t}|\forall t\}
$, $
    \mathcal{Y}
    \!\triangleq\!
    \{\mathcal{Y}_{t}|\forall t\}
$, $
    \Omega
    \!\triangleq\!
    \{\Omega_{t}|\forall t\}
$, and $
    \varrho(\mathcal{Q}_{T\!+\!1})
    \!\triangleq\!
    \sum_{k\in\mathcal{K}}Q_{T\!+\!1,k}
$ counts for the remaining packet number at the end of one scheduling period. The expectation is taken with respect to the randomness of packet arrivals $\mathcal{A}$ and baseband channel power gains $\mathcal{Y}$. As a result, the transmission design in this paper can be formulated as the following dynamic programming problem.
\begin{shrinkeq}{-1ex}
    \begin{align*}
        \bm{\mathsf{P1:}}\ \Omega^{\star}
        \triangleq
        \{\Omega_{t}^{\star}|\forall t\}
        =
         &
        \textstyle{
            \mathop{\arg\min}_{\Omega}G(\mathcal{S}_{1},\Omega)
        }  \\
         &
        \mathrm{s.t.}\quad 0\leq P_{t}\leq P_{\mathrm{max}},\ \forall t.
    \end{align*}
\end{shrinkeq}

We shall adopt the Bellman's equations with {\textit{post-decision value functions}} to solve $\mathsf{P1}$. Note that the baseband channel power gains are independently distributed in different frames, it can be averaged as in \cite{8653854}, and the Bellman's equation based on the post-decision QSI only can be written as
\begin{shrinkeq}{-1ex}
    \begin{align}
        \label{eqn:bellman1}
        W_{t}(\mathcal{Q}_{t}^{\mathrm{D}})
        =
         &
        \textstyle{
        \min_{\Omega_{t+1}(\mathcal{S}_{t+1})}\mathbb{E}_{\mathcal{A}_{t},\mathcal{Y}_{t+1}}\big[g_{t+1}\big(\mathcal{S}_{t+1},\Omega_{t+1}(\mathcal{S}_{t+1})\big)
        }\nonumber \\
         &
        \!+\!
        W_{t+1}\big(\mathcal{Q}_{t+1}^{\mathrm{D}}\big)\big|\mathcal{Q}_{t}^{\mathrm{D}}\big]
    \end{align}
\end{shrinkeq}
where $
    W_{t}(\mathcal{Q}_{t}^{\mathrm{D}})
$ is the post-decision value function of the optimal policy (referred to as optimal value function for short), and $
    W_{T}(\mathcal{Q}_{T}^{\mathrm{D}})
    \!\triangleq\!
    \mathbb{E}_{\mathcal{A}_{T}}\varrho(\mathcal{Q}_{T+1}|\mathcal{Q}_{T}^{\mathrm{D}})
$ for the notation convenience. The Bellman's equations with post-decision system state can avoid complicated calculation of transition matrix and expectation of value functions\cite{powell2007approximate}. Moreover, the optimal scheduling policy in the $t$-th frame $(\forall t\!=\!1,2,\ldots,T)$ for $\mathsf{P1}$ can be obtained by
\begin{shrinkeq}{-1ex}
    \begin{align}
        \label{eqn:optimal-policy}
        \Omega_{t}^{\star}(\mathcal{S}_{t})
        =
         &
        \textstyle{
        \mathop{\arg\min}_{\Omega_{t}(\mathcal{S}_{t})}
        \big\{g_{t}\big(\mathcal{S}_{t},\Omega_{t}(\mathcal{S}_{t})\big)
        +
        W_{t}\big(\mathcal{Q}_{t}^{\mathrm{D}}\big)\big\}
        }                                                                            \\
         & \mathrm{s.t.}\quad 0\leq P_{t}\leq P_{\mathrm{max}},\ \forall t.\nonumber
    \end{align}
\end{shrinkeq}

\section{Low-Complexity Scheduling}
\label{sec:low-complexity_scheduling}
It can be observed from \eqref{eqn:optimal-policy} that the optimal value function for the $t$-th frame $W_{t}(\mathcal{Q}_{t}^{\mathrm{D}})$ should be calculated before the derivation of the optimal scheduling policy for the $t$-th frame $\Omega_{t}(\mathcal{S}_{t})$. However, because of the minimization in \eqref{eqn:optimal-policy}, it is difficult to derive the closed-form expression for the optimal value function in each frame $\{W_{t}(\mathcal{Q}_{t}^{\mathrm{D}})|\forall t\}$. It is also difficult to calculate the value functions for all possible system states numerically due to the huge system state space. In this section, a low-complexity solution framework is proposed. Specifically, we first adopt the backpressure algorithm \cite{georgiadis2006resource} as the base policy and derive the closed-form expressions of its value functions (referred to as approximate value function for short). Then the approximate value functions are used to approximate the optimal value functions to obtain the improved scheduling policy by one-step policy improvement.

\subsection{Base Policy}
As in \eqref{eqn:bellman1}, the base policy provides an approximation of average future cost to improve current scheduling actions. It should have a good scheduling performance and a simple structure for analysis. Hence, as the base policy, we predetermine the transmission powers and UE selection for all frames at the very beginning of one scheduling period. Particularly, given the system QSI in the $1$-st frame, we use the average downlink throughput and packet arrival rate to approximate the queue dynamics, and adopt the {\textit{backpressure algorithm}} \cite{georgiadis2006resource} to select the downlink UE. Let $P_{\mathrm{BSL}}$ be the predetermined transmission power and $Q^{\dagger}_{t,k}$ be the approximate queue length of the $k$-th queue at the beginning of the $t$-th frame, then the index of the selected UE is given by
\begin{shrinkeq}{-1ex}
    \begin{align}
        \label{eqn:UE-selection}
        \textstyle{
        d_{t}^{\Pi}
        \!=\!
        \mathop{\arg\max}_{k\in\mathcal{K}}\big[R_{t,k}^{\dagger}(P_{\mathrm{BSL}})Q_{t,k}^{\dagger}\big],\ \forall t,
        }
    \end{align}
\end{shrinkeq}
where $R_{t,k}^{\dagger}$ is the predicted average spectral efficiency of the $k$-th UE in the $t$-th frame, i.e., $
    \textstyle{
    R_{t,k}^{\dagger}(P_{\mathrm{BSL}})
    \!\triangleq\!
    \int\log_{2}\big(1\!+\!\frac{P_{\mathrm{BSL}}x}{N_{0}W}\big)
    \frac{\mathrm{d}F_{Y_{t,k}}(x)}{\mathrm{d}x}\mathrm{d}x,
    \ \forall t,k\!\in\!\mathcal{K}
    }
$. Moreover, given UE selection in the $t$-th frame, the QSI in the $(t\!+\!1)$-th frame can be approximated by
\begin{shrinkeq}{-1ex}
    \begin{align}
        \label{eqn:approximate_QSI}
        Q_{t+1,k}^{\dagger}
        \!=\!
        \begin{cases}
            \min\{(Q_{t,k}^{\dagger}\!-\!D_{t,k}^{\dagger})^{+}\!+\!\lambda_{k},Q_{\mathrm{max}}\}
             & k\!=\!d_{t}^{\Pi},    \\
            \min\{Q_{t,k}^{\dagger}\!+\!\lambda_{k},Q_{\mathrm{max}}\}
             & k\!\neq\!d_{t}^{\Pi},
        \end{cases}
    \end{align}
\end{shrinkeq}
where $
    Q_{1,k}^{\dagger}
    \!=\!
    Q_{1,k}
$ and $
    D_{t,k}^{\dagger}
    \!\triangleq\!
    \big\lfloor WR_{t,k}^{\dagger}(P_{\mathrm{BSL}})N_{\mathrm{F}}/B\big\rfloor,
    \allowbreak\ \forall t,k\!\in\!\mathcal{K}
$. By applying \eqref{eqn:UE-selection} and \eqref{eqn:approximate_QSI} iteratively, the UE selection of the base policy can be determined. Hence, the base policy can be summarized as following.

\begin{Policy}[Base Policy $\Pi$]
    The transmission power to each selected UE for base policy is fixed, i.e., $P_{t}\!=\!P_{\mathrm{BSL}},\forall t$. Moreover, the UE selection is determined by applying \eqref{eqn:UE-selection} and \eqref{eqn:approximate_QSI} iteratively, which is denoted as $\{d_{1}^{\Pi},d_{2}^{\Pi},\ldots,d_{T}^{\Pi}\}$.
\end{Policy}

Although the base policy is obtained by approximating queue dynamics with averaging, the approximate value function should be evaluated with actual distribution of packet arrivals and departures. Hence, the approximate value function, which approximately measures the average system cost for all the UEs from the $(t\!+\!1)$-th frame, can be written as
\begin{align}
    \label{eqn:W_pi_decouple}
    \textstyle{
    W_{t}^{\Pi}(\mathcal{Q}_{t}^{\mathrm{D}})
    =
    (T-t)w_{\mathrm{P}}P_{\mathrm{BSL}}
    +
    \sum_{k\in\mathcal{K}}W_{t,k}^{\Pi}({Q}_{t,k}^{\mathrm{D}}),
    }
\end{align}
where the {\textit{local value function}} $W_{t,k}^{\Pi}({Q}_{t,k}^{\mathrm{D}})$ is defined as the average queuing cost raised by the $k$-th UE from the $(t\!+\!1)$-th frame to the end of the current scheduling period given the base policy $\Pi$ and its {\textit{local post-decision QSI}} ${Q}_{t,k}^{\mathrm{D}}$. Thus,
\begin{shrinkeq}{-1ex}
    \begin{align}
        W_{t,k}^{\Pi}({Q}_{t,k}^{\mathrm{D}})
        \triangleq
         & \mathbb{E}_{\mathcal{A},\mathcal{Y}}^{\Pi}\big[{\textstyle{\sum_{\tau=t+1}^{T}}}\big(Q_{\tau,k}\!+\!w_{\mathrm{Q}}\mathbb{I}[Q_{\tau,k}\!=\!{Q}_{\mathrm{max}}]\big) \nonumber \\
         & \!+\!
        Q_{T+1,k}\big|Q_{t,k}^{\mathrm{D}}\big],\ \forall{k}\in\mathcal{K}.
    \end{align}
\end{shrinkeq}

In order to derive the analytical expression of $W_{t,k}^{\Pi}({Q}_{t,k}^{\mathrm{D}})$, denote the number of departure packets from the $k$-th queue in the $t$-th frame under the base policy as $D_{t,k}^{\Pi}$ if the $k$-th UE is selected, then we have the following lemma.
\begin{Lemma}
    \label{lem:PDF_of_D}
    With Scheme \ref{scheme:beamforming}, the probability mass function (PMF) of $D_{t,k}^{\Pi}$ is given by $
        \Pr[D_{t,k}^{\Pi}\!=\!n]
        \!=\!
        \textstyle{
            F_{Y_{t,k}}\big[\big(2^{\frac{(n\!+\!1)B}{WT_{\mathrm{F}}}}\!-\!1\big)\frac{N_{0}W}{P_{\mathrm{BSL}}}\big]
            \!-\!
            F_{Y_{t,k}}\big[\big(2^{\frac{nB}{WT_{\mathrm{F}}}}\!-\!1\big)\frac{N_{0}W}{P_{\mathrm{BSL}}}\big]
        }
    $.
\end{Lemma}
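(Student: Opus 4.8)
The plan is to express the event $\{D_{t,k}^{\Pi}=n\}$ directly in terms of the baseband channel power gain $Y_{t,k}$ and then invoke Lemma~\ref{lem:CDF_Y}. Recall that under the base policy the transmission power to a selected UE is fixed at $P_{\mathrm{BSL}}$, so if the $k$-th UE is selected in the $t$-th frame its spectral efficiency is $R_{t,k}=\log_{2}(1+P_{\mathrm{BSL}}Y_{t,k}/(N_{0}W))$ and the number of departing packets is $D_{t,k}^{\Pi}=\lfloor WR_{t,k}T_{\mathrm{F}}/B\rfloor$. First I would unwind the floor: $D_{t,k}^{\Pi}=n$ is equivalent to $n\le WR_{t,k}T_{\mathrm{F}}/B<n+1$, i.e. $\frac{nB}{WT_{\mathrm{F}}}\le R_{t,k}<\frac{(n+1)B}{WT_{\mathrm{F}}}$.

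Next I would translate the bound on $R_{t,k}$ into a bound on $Y_{t,k}$. Since $R_{t,k}=\log_{2}(1+P_{\mathrm{BSL}}Y_{t,k}/(N_{0}W))$ is a strictly increasing function of $Y_{t,k}$, the inequality $\frac{nB}{WT_{\mathrm{F}}}\le R_{t,k}<\frac{(n+1)B}{WT_{\mathrm{F}}}$ is equivalent to $\big(2^{nB/(WT_{\mathrm{F}})}-1\big)\frac{N_{0}W}{P_{\mathrm{BSL}}}\le Y_{t,k}<\big(2^{(n+1)B/(WT_{\mathrm{F}})}-1\big)\frac{N_{0}W}{P_{\mathrm{BSL}}}$. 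Writing $x_{n}\triangleq\big(2^{nB/(WT_{\mathrm{F}})}-1\big)\frac{N_{0}W}{P_{\mathrm{BSL}}}$, this is just $Y_{t,k}\in[x_{n},x_{n+1})$, so $\Pr[D_{t,k}^{\Pi}=n]=\Pr[Y_{t,k}<x_{n+1}]-\Pr[Y_{t,k}<x_{n}]$. Because $Y_{t,k}$ is a continuous random variable (its CDF in \eqref{eqn:CDF_Y} has no atoms on $x>0$), $\Pr[Y_{t,k}<x]=F_{Y_{t,k}}(x)$, and substituting the expressions for $x_{n}$ and $x_{n+1}$ gives exactly the claimed PMF.

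The step requiring the most care is the boundary case $n=0$: one must check that $x_{0}=0$, so that $\Pr[D_{t,k}^{\Pi}=0]=F_{Y_{t,k}}(x_{1})-F_{Y_{t,k}}(0)=F_{Y_{t,k}}(x_{1})$, which is consistent with the stated formula since $F_{Y_{t,k}}(0)=0$; and also that the derivation is valid at $n=0$ even though $\log_{2}$ is applied near its boundary, which it is since $Y_{t,k}\ge 0$ always gives $R_{t,k}\ge 0$. I would also remark that Lemma~\ref{lem:CDF_Y} — and hence the closed form for $F_{Y_{t,k}}$ — is what lets the PMF be written explicitly; the asymptotic regime ``$N_{\mathrm{R}},N_{\mathrm{T}}$ sufficiently large'' is inherited from that lemma and is the reason the statement of Lemma~\ref{lem:PDF_of_D} is prefaced by ``With Scheme~\ref{scheme:beamforming}.'' Beyond these checks the argument is a short deterministic manipulation, so I do not expect a genuine obstacle; the only subtlety is keeping the direction of the floor inequality and the monotone substitution straight.
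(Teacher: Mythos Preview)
Your proposal is correct and matches the paper's own argument, which simply states that the result is straightforward from the definitions of $D_{t,k}$ and $R_{t,k}$. Your write-up merely makes explicit the floor-unwinding and monotone substitution that the paper leaves implicit.
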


\begin{proof}
    The proof is straightforward based on the definition of $D_{t,k}$ and $R_{t,k}$.
\end{proof}

Hence, the local approximate value function $W_{t,k}^{\Pi}({Q}_{t,k}^{\mathrm{D}})$ ($\forall t,k$) is derived in the following lemma.
\begin{Lemma}[Analytical Expression of $W_{t,k}^{\Pi}({Q}_{t,k}^{\mathrm{D}})$]
    \label{lem:analytical_expression}
    Let $
        \mathbf{u}_{t,k},\mathbf{s}_{t,\tau,k},\mathbf{c}^{(\!1\!)},\mathbf{c}^{(\!2\!)}\!\in\!\mathbb{R}^{(Q_{\mathrm{max}}\!+\!1)\!\times\!{1}}
    $, which are defined by $
        \mathbf{u}_{t,k}
        \!\triangleq\!
        \mathbf{1}_{Q_{t,k}^{\mathrm{D}}\!+\!1}
    $, $
        \mathbf{s}_{t,\tau,k}
        \!\triangleq\!
        \mathbf{1}_{Q_{\tau,k}\!+\!1}
    $, $
        [\mathbf{c}^{(\!1\!)}]_{i}
        \!\triangleq\!
        i\!-\!1\!+\!w_{\mathrm{Q}}\mathbb{I}[i\!=\!Q_{\mathrm{max}}\!+\!1]
    $, and $
        [\mathbf{c}^{(\!2\!)}]_{i}
        \!\triangleq\!
        i\!-\!1
    $, respectively. $\mathbf{1}_{i}$ denotes the column vector whose $i$-th element is $1$ and other elements are $0$. Let $
        \mathbf{P}_{k},\mathbf{M}_{t,k}
        \!\in\!
        \mathbb{R}^{(Q_{\mathrm{max}}\!+\!1)\!\times\!(Q_{\mathrm{max}}\!+\!1)}
    $, whose entries are given by \eqref{eqn:P_k} and \eqref{eqn:M_tk}, respectively. Then $W_{t,k}^{\Pi}(Q_{t,k}^{\mathrm{D}})$ can be represented by
    \begin{shrinkeq}{-1ex}
        \begin{align}
            \label{eqn:local-value-analytical}
            \textstyle{
            W_{t,k}^{\Pi}({Q}_{t,k}^{\mathrm{D}})
            =
            \sum_{\tau=t+1}^{T}\mathbf{s}_{t,\tau,k}^{\mathsf{T}}\mathbf{c}^{(\!1\!)}
            +
            \mathbf{s}_{t,T+1,k}^{\mathsf{T}}\mathbf{c}^{(\!2\!)},
            }
        \end{align}
    \end{shrinkeq}
    where $
        \mathbf{s}_{t,\tau,k}
        \!=\!
        \mathbf{X}_{k}^{\mathsf{T}}(t,\tau)\mathbf{P}_{k}^{\mathsf{T}}\mathbf{u}_{t,k}
    $, and $
        \mathbf{X}_{k}(t,\tau)
        \!=\!
        \prod_{n=t+1}^{\tau-1}\mathbf{M}_{n,k}^{\mathbb{I}(d_{n}^{\Pi}=k)}\mathbf{P}_{k}^{\mathbb{I}(d_{n}^{\Pi}\neq{k})}
    $, $
        \tau=t+1,\ldots,T
    $.
\end{Lemma}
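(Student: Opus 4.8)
The plan is to exploit that, under the base policy $\Pi$, the transmit power $P_{\mathrm{BSL}}$ and the UE-selection sequence $\{d_1^{\Pi},\ldots,d_T^{\Pi}\}$ are \emph{deterministic} quantities, fixed at the start of the scheduling period by the averaged recursions \eqref{eqn:UE-selection}--\eqref{eqn:approximate_QSI} (which do not see the realized queue lengths or channel gains). Hence the true queue process $\{Q_{\tau,k}\}_{\tau}$ of UE $k$ decouples from the other queues and is a time-inhomogeneous finite-state Markov chain on $\{0,1,\ldots,Q_{\mathrm{max}}\}$: the arrivals $A_{\tau,k}$ are i.i.d.\ Poisson and independent of everything else, while the departure count $D_{\tau,k}^{\Pi}$ depends only on $Y_{\tau,k}$ through $R_{\tau,k}$ at power $P_{\mathrm{BSL}}$, with $\{Y_{\tau,k}\}_{\tau}$ independent across frames by the channel model. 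The first step is therefore to make this Markov structure precise and to identify the two one-step kernels appearing in the statement: the arrival-only kernel $\mathbf{P}_k$, which takes a post-decision QSI $Q_{\tau,k}^{\mathrm{D}}$ to the next pre-decision QSI $Q_{\tau+1,k}=\min\{Q_{\tau,k}^{\mathrm{D}}+A_{\tau,k},Q_{\mathrm{max}}\}$ and which, because $Q_{\tau,k}^{\mathrm{D}}=Q_{\tau,k}$ whenever $d_{\tau}^{\Pi}\neq k$, also governs the pre-decision-to-pre-decision transition in any unserved frame; and the departure-then-arrival kernel $\mathbf{M}_{\tau,k}$, used in frames with $d_{\tau}^{\Pi}=k$, which composes $Q_{\tau,k}\mapsto(Q_{\tau,k}-D_{\tau,k}^{\Pi})^{+}$ with an independent Poisson increment and truncation at $Q_{\mathrm{max}}$. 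Their entries \eqref{eqn:P_k}--\eqref{eqn:M_tk} follow directly from the Poisson PMF and from the PMF of $D_{\tau,k}^{\Pi}$ in Lemma~\ref{lem:PDF_of_D}, with the usual boundary bookkeeping: overflow mass is accumulated into the last column, while in $\mathbf{M}_{\tau,k}$ departures exceeding the current queue are absorbed by the $(\cdot)^{+}$ clipping.

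Next I would propagate the state law forward. Writing $\mathbf{u}_{t,k}=\mathbf{1}_{Q_{t,k}^{\mathrm{D}}+1}$ for the (degenerate) law of the given post-decision QSI at frame $t$, one application of $\mathbf{P}_k^{\mathsf{T}}$ yields the conditional PMF vector of $Q_{t+1,k}$; then, for $n=t+1,\ldots,\tau-1$, applying $\mathbf{M}_{n,k}^{\mathsf{T}}$ when $d_n^{\Pi}=k$ and $\mathbf{P}_k^{\mathsf{T}}$ otherwise produces, by the Chapman--Kolmogorov relation, the conditional PMF of $Q_{\tau,k}$ given $Q_{t,k}^{\mathrm{D}}$. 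Collapsing the ordered product of these transposed kernels via $(AB)^{\mathsf{T}}=B^{\mathsf{T}}A^{\mathsf{T}}$ turns it into $\mathbf{X}_k^{\mathsf{T}}(t,\tau)$ with $\mathbf{X}_k(t,\tau)=\prod_{n=t+1}^{\tau-1}\mathbf{M}_{n,k}^{\mathbb{I}(d_n^{\Pi}=k)}\mathbf{P}_k^{\mathbb{I}(d_n^{\Pi}\neq k)}$, so that $\mathbf{s}_{t,\tau,k}=\mathbf{X}_k^{\mathsf{T}}(t,\tau)\mathbf{P}_k^{\mathsf{T}}\mathbf{u}_{t,k}=\mathbb{E}^{\Pi}\big[\mathbf{1}_{Q_{\tau,k}+1}\,\big|\,Q_{t,k}^{\mathrm{D}}\big]$ for every $\tau=t+1,\ldots,T+1$ (the empty product at $\tau=t+1$ being the identity). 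Linearity of expectation then closes the argument: since $[\mathbf{c}^{(1)}]_i=i-1+w_{\mathrm{Q}}\mathbb{I}[i=Q_{\mathrm{max}}+1]$ and $[\mathbf{c}^{(2)}]_i=i-1$, we get $\mathbb{E}^{\Pi}\big[Q_{\tau,k}+w_{\mathrm{Q}}\mathbb{I}[Q_{\tau,k}=Q_{\mathrm{max}}]\,\big|\,Q_{t,k}^{\mathrm{D}}\big]=\mathbf{s}_{t,\tau,k}^{\mathsf{T}}\mathbf{c}^{(1)}$ and $\mathbb{E}^{\Pi}\big[Q_{T+1,k}\,\big|\,Q_{t,k}^{\mathrm{D}}\big]=\mathbf{s}_{t,T+1,k}^{\mathsf{T}}\mathbf{c}^{(2)}$; summing over $\tau=t+1,\ldots,T$ and adding the terminal term yields \eqref{eqn:local-value-analytical}.

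The main obstacle is conceptual rather than computational: one must argue cleanly that conditioning on $Q_{t,k}^{\mathrm{D}}$ alone (not the full system state) is legitimate, which hinges on the base policy's UE selection being a fixed schedule together with the across-frame independence of $\{A_{\tau,k}\}$ and $\{Y_{\tau,k}\}$, and one must keep the post-decision/pre-decision bookkeeping consistent so that exactly one leading $\mathbf{P}_k^{\mathsf{T}}$ (the arrival step out of the frame-$t$ post-decision state) appears while all kernels for frames $t+1,\ldots,\tau-1$ are of the pre-decision-to-pre-decision type with the correct choice of $\mathbf{M}_{n,k}$ versus $\mathbf{P}_k$. Everything else---the explicit truncated-Poisson entries of $\mathbf{P}_k$ and the convolution of the $D_{\tau,k}^{\Pi}$-PMF with a Poisson PMF inside $\mathbf{M}_{\tau,k}$---is routine once the kernels are written down.
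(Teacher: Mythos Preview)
Your proposal is correct and follows essentially the same approach as the paper's own proof: interpret $\mathbf{u}_{t,k}$ and $\mathbf{s}_{t,\tau,k}$ as post- and pre-decision probability vectors, $\mathbf{P}_k$ and $\mathbf{M}_{t,k}$ as the corresponding transition kernels, $\mathbf{c}^{(1)},\mathbf{c}^{(2)}$ as per-frame cost vectors, and then read off \eqref{eqn:local-value-analytical} by propagating the queue-length distribution and taking expectations. Your write-up is considerably more explicit than the paper's (which merely identifies the interpretations and declares the result ``straightforward''), in particular regarding why the single-queue Markov reduction is valid under the deterministic base-policy schedule and why exactly one leading $\mathbf{P}_k^{\mathsf{T}}$ appears, but the underlying argument is the same.
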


\begin{proof}
    Please refer to Appendix \ref{proof:analytical_expression}.
\end{proof}

\begin{figure*}
    \vspace{0.05in}
    \begin{shrinkeq}{-1ex}
        \begin{align}
            \label{eqn:P_k}
            [\mathbf{P}_{k}]_{i,j}
            =
            \begin{cases}
                \Pr[A_{t,k}=j-i]                      & 1\leq i\leq Q_{\mathrm{max}} \mbox{ and } i\leq j\leq Q_{\mathrm{max}}, \\
                \Pr[A_{t,k}\geq Q_{\mathrm{max}}+1-i] & 1\leq i\leq Q_{\mathrm{max}} \mbox{ and } j=Q_{\mathrm{max}}+1,         \\
                1                                     & i=Q_{\mathrm{max}}+1 \mbox{ and } j=Q_{\mathrm{max}}+1,                 \\
                0                                     & \mbox{otherwise}
            \end{cases}\quad\quad\quad\quad\quad\quad\quad\quad \\
            \label{eqn:M_tk}
            [\mathbf{M}_{t,k}]_{i,j}
            =
            \begin{cases}
                \Pr[D_{t,k}^{\Pi}\geq i-1]\Pr[A_{t,k}=0]                      & 1\leq i\leq Q_{\mathrm{max}}+1 \mbox{ and } j=1,                          \\
                \Pr[A_{t,k}-\min(D_{t,k}^{\Pi},i-1)=j-i]                      & 1\leq i\leq Q_{\mathrm{max}}+1 \mbox{ and } 2\leq j\leq Q_{\mathrm{max}}, \\
                \Pr[A_{t,k}-\min(D_{t,k}^{\Pi},i-1)\geq Q_{\mathrm{max}}+1-i] & 1\leq i\leq Q_{\mathrm{max}}+1 \mbox{ and } j=Q_{\mathrm{max}}+1
            \end{cases}\quad\quad\quad\quad
        \end{align}
    \end{shrinkeq}
    \hrulefill
    \vspace{-1.5\baselineskip}
\end{figure*}

\subsection{Scheduling with Approximate Value Function}
In this part, we use approximate value functions $\{W_{t}^{\Pi}(\mathcal{Q}_{t}^{\mathrm{D}})|\forall t,\mathcal{Q}_{t}^{\mathrm{D}}\}$ to approximate optimal value functions $\{W_{t}(\mathcal{Q}_{t}^{\mathrm{D}})|\forall t,\mathcal{Q}_{t}^{\mathrm{D}}\}$. Because the approximate value function is analytically expressed, conventional value iteration to evaluate the value function can be avoided, which can significantly reduce the computational complexity. With the approximate value function, the scheduling actions in the $t$-th ($\forall t$) frame could be obtained by solving the following problem.
\begin{shrinkeq}{-1ex}
    \begin{align*}
        \bm{\mathsf{P2}}
         &
        \ \textrm{(One-Step Policy Iteration): }
        \\
         &
        \Psi_{t}(\mathcal{S}_{t})
        \triangleq
        (d_{t}^{\Psi},P_{t}^{\Psi})
        \\
        =
         &
        \textstyle{
        \mathop{\arg\min}_{\Omega_{t}(\mathcal{S}_{t})}\big\{g_{t}\big(\mathcal{S}_{t},\Omega_{t}(\mathcal{S}_{t})\big)
        \!+\!
        W_{t}^{\Pi}\big(\mathcal{Q}_{t}^{\mathrm{D}}(\mathcal{S}_{t},\Omega_{t})\big)\big\},
        }
        \\
         &
        \mathrm{s.t.}\quad 0\leq P_{t}\leq P_{\mathrm{max}}.
    \end{align*}
\end{shrinkeq}

Since both the current system cost and approximate value function in the $t$-th frame can be decomposed by each UE, $\mathsf{P2}$ can be decomposed into $K$ sub-problems. The $k$-th sub-problem is given by
\begin{shrinkeq}{-1ex}
    \begin{align*}
        \bm{\mathsf{P2(k):}}\ P_{t,k}^{\psi}
        =
         &
        \textstyle{
        \mathop{\arg\min}_{P_{t,k}}\big\{w_{p}P_{t,k}
        +
        W_{t}^{\Pi}\big(\mathcal{Q}_{t}^{\mathrm{D}}(\mathcal{S}_{t},\pi_{t,k})\big)\big\},
        }                                                         \\
         & \mathrm{s.t.}\quad 0\leq P_{t,k}\leq P_{\mathrm{max}},
    \end{align*}
\end{shrinkeq}
where $
    \pi_{t,k}
    \!\triangleq\!
    (k,P_{t,k})
$ denotes the policy that the $k$-th UE is selected. Moreover, denote the minimized objective of $\mathsf{P2(k)}$ as $G_{t,k}^{\psi}$, then the optimal solution of $\mathsf{P2}$ can be derived by $
    d_{t}^{\Psi}
    \!=\!
    \mathop{\arg\min}_{k\in\mathcal{K}}G_{t,k}^{\psi}
$ and $
    P_{t}^{\Psi}
    \!=\!
    P_{t,d_{t}^{\Psi}}^{\psi}
$.

The transmission power allocation for $\mathsf{P2(k)}$, which is a discrete optimization problem, can be achieved via the following one-dimensional search.
\begin{Lemma}[Local Power Optimization]
    \label{lem:power}
    The optimized transmission power for the $k$-th UE (solution of $\mathsf{P2(k)}$) is
    \begin{shrinkeq}{-1ex}
        \begin{align}
            \label{eqn:power}
            \textstyle{
            P_{t,k}^{\psi}
            =
            \mathop{\arg\min}_{P_{t,k}\in\mathscr{P}_{t,k}}\big\{w_{\mathrm{P}}P_{t,k}
            +
            \Delta\mathbf{z}_{t,k}^{\mathsf{T}}(P_{t,k})\mathbf{V}_{t,k}\big\},
            }
        \end{align}
    \end{shrinkeq}
    where $
        \textstyle{
            \mathscr{P}_{t,k}
            \!\triangleq\!
            \big\{
            0
            ,
            \!\frac{2^{\frac{B}{WN_{\mathrm{F}}}}\!-\!1}{Y_{t,k}}
            \!,\!
            \frac{2^{\frac{2B}{WN_{\mathrm{F}}}}\!-\!1}{Y_{t,k}}
            \!,\!
            \ldots
            \!,\!
            \min\!\big(\!\frac{2^{\frac{Q_{\mathrm{max}}B}{WN_{\mathrm{F}}}}\!-\!1}{Y_{t,k}}\!,\!P_{\mathrm{max}}\!\big)\!\big\}
        }
    $ is the feasible power set, $
        \Delta\mathbf{z}_{t,k}(P_{t,k})
        \!=\!
        \mathbf{1}_{Q_{t,k}^{\mathrm{D}}(P_{t,k})\!+\!1}
        \!-\!
        \mathbf{1}_{Q_{t,k}\!+\!1}
    $, $
        \textstyle{
            Q_{t,k}^{\mathrm{D}}(P_{t,k})
            \!=\!
            \big(Q_{t,k}
            \!-\!
            \big\lfloor\frac{WR_{t,k}(P_{t,k})T_{\mathrm{F}}}{B}\big\rfloor\big)^{+}
        }
    $, $
        \mathbf{V}_{t,k}
        \!=\!
        \sum_{\tau=t+1}^{T}\mathbf{Y}_{\tau,k}^{\mathsf{T}}\mathbf{c}^{(\!1\!)}
        \!+\!
        \sum_{\tau=t+2}^{T+1}\mathbf{Y}_{\tau,k}^{\mathsf{T}}\mathbf{c}^{(\!2\!)}
    $, and
    \begin{shrinkeq}{-1ex}
        \begin{align}
            \mathbf{Y}_{\tau,k}=
            \begin{cases}
                \mathbf{P}_{k}^{\mathsf{T}}
                 & \tau=t+1,          \\
                \mathbf{X}_{k}^{\mathsf{T}}(t+1,\tau)\mathbf{Y}_{t+1,k}
                 & \tau=t+2,\ldots,T.
            \end{cases}
        \end{align}
    \end{shrinkeq}
\end{Lemma}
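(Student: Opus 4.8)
The plan is to collapse $\mathsf{P2}(k)$ to a scalar problem in $P_{t,k}$ and then show that, although this scalar problem is posed over the continuum $[0,P_{\mathrm{max}}]$, it has an optimizer inside the finite set $\mathscr{P}_{t,k}$. First I would substitute the decoupled form \eqref{eqn:W_pi_decouple} into the objective of $\mathsf{P2}(k)$. Selecting $\pi_{t,k}=(k,P_{t,k})$ changes, by \eqref{eqn:Q_tkD}, only the $k$-th post-decision queue, which becomes $Q_{t,k}^{\mathrm{D}}(P_{t,k})=\big(Q_{t,k}-D_{t,k}(P_{t,k})\big)^{+}$ with $D_{t,k}(P_{t,k})=\lfloor WR_{t,k}(P_{t,k})T_{\mathrm{F}}/B\rfloor$, while $Q_{t,j}^{\mathrm{D}}=Q_{t,j}$ for $j\neq k$; hence $(T-t)w_{\mathrm{P}}P_{\mathrm{BSL}}$ and the summands $W_{t,j}^{\Pi}(Q_{t,j})$, $j\neq k$, are action-independent constants. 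By Lemma \ref{lem:analytical_expression}, \eqref{eqn:local-value-analytical} exhibits $W_{t,k}^{\Pi}(q)$ as a linear functional $\mathbf{1}_{q+1}^{\mathsf{T}}\mathbf{V}_{t,k}$ of the one-hot encoding of the post-decision queue length, the weight vector $\mathbf{V}_{t,k}$ being the expected future queuing cost as a function of the post-decision queue, assembled from the base-policy transition matrices $\mathbf{P}_{k},\mathbf{M}_{n,k}$ and the cost vectors $\mathbf{c}^{(1)},\mathbf{c}^{(2)}$ over frames $t+1,\ldots,T+1$ (which is what the recursion for $\mathbf{Y}_{\tau,k}$ records). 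Subtracting the constant $\mathbf{1}_{Q_{t,k}+1}^{\mathsf{T}}\mathbf{V}_{t,k}=W_{t,k}^{\Pi}(Q_{t,k})$ (the value at $P_{t,k}=0$, since then $D_{t,k}=0$), the objective of $\mathsf{P2}(k)$ reduces, up to an additive constant, to $w_{\mathrm{P}}P_{t,k}+\Delta\mathbf{z}_{t,k}^{\mathsf{T}}(P_{t,k})\mathbf{V}_{t,k}$ with $\Delta\mathbf{z}_{t,k}(P_{t,k})=\mathbf{1}_{Q_{t,k}^{\mathrm{D}}(P_{t,k})+1}-\mathbf{1}_{Q_{t,k}+1}$, which is exactly the expression minimized in \eqref{eqn:power}.

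It then remains to reduce the minimization of $w_{\mathrm{P}}P_{t,k}+\Delta\mathbf{z}_{t,k}^{\mathsf{T}}(P_{t,k})\mathbf{V}_{t,k}$ over $[0,P_{\mathrm{max}}]$ to a search over $\mathscr{P}_{t,k}$. The key observation is that $P_{t,k}$ enters only through the nonnegative integer $D_{t,k}(P_{t,k})$ inside $Q_{t,k}^{\mathrm{D}}(P_{t,k})$. Since $R_{t,k}(P_{t,k})=\log_{2}\!\big(1+P_{t,k}Y_{t,k}/(N_{0}W)\big)$ is continuous and strictly increasing on $[0,P_{\mathrm{max}}]$, $D_{t,k}(P_{t,k})$ is a nondecreasing, right-continuous step function whose value increments by one precisely at the powers for which $WR_{t,k}(P_{t,k})T_{\mathrm{F}}/B$ equals an integer $n\geq1$; solving this equation yields exactly the elements of $\mathscr{P}_{t,k}$. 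On the interval between two consecutive such powers (intersected with $[0,P_{\mathrm{max}}]$) the quantities $D_{t,k}$, $Q_{t,k}^{\mathrm{D}}(P_{t,k})$, and $\Delta\mathbf{z}_{t,k}^{\mathsf{T}}(P_{t,k})\mathbf{V}_{t,k}$ are constant, so the objective equals $w_{\mathrm{P}}P_{t,k}+\text{const}$, which is strictly increasing and hence minimized over that interval at its left endpoint. Therefore an optimizer lies among $\{0\}$ and the jump powers falling in $[0,P_{\mathrm{max}}]$. Finally, because $(Q_{t,k}-D_{t,k})^{+}$ stays $0$ once $D_{t,k}\geq Q_{t,k}$ and $Q_{t,k}\leq Q_{\mathrm{max}}$, jump powers with $n>Q_{\mathrm{max}}$ either leave $Q_{t,k}^{\mathrm{D}}$ at $0$ (and are then strictly dominated by the smallest power reaching $0$, as they cost strictly more) or are infeasible; clipping the largest retained candidate to $P_{\mathrm{max}}$ gives precisely $\mathscr{P}_{t,k}$, which completes the proof.

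I expect the main obstacle to be this last reduction: one must argue cleanly that, even though $\mathbf{V}_{t,k}$ possesses no exploitable monotonicity or convexity in general, the objective is globally of the form ``strictly increasing affine term plus a piecewise-constant term with finitely many pieces'' — the pieces being the (finitely many, because $Q_{t,k}\le Q_{\mathrm{max}}$) level sets of the integer map $P_{t,k}\mapsto D_{t,k}(P_{t,k})$ — so that only the finitely many left endpoints need be checked and no interior point of a piece can be optimal. By contrast, the decoupling of \eqref{eqn:W_pi_decouple} into the $k$-th local term and the rearrangement of \eqref{eqn:local-value-analytical} into the linear functional $\Delta\mathbf{z}_{t,k}^{\mathsf{T}}(P_{t,k})\mathbf{V}_{t,k}$ are routine bookkeeping of the matrix products, given Lemma \ref{lem:analytical_expression}.
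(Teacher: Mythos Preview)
Your proposal is correct and follows essentially the same approach as the paper's own proof: the paper tersely notes that $\mathscr{P}_{t,k}$ consists of the minimum powers needed to transmit each integer number of packets (so restricting to it preserves optimality) and that $\Delta\mathbf{z}_{t,k}^{\mathsf{T}}(P_{t,k})\mathbf{V}_{t,k}=W_{t,k}^{\Pi}\big(Q_{t,k}^{\mathrm{D}}(P_{t,k})\big)-W_{t,k}^{\Pi}(Q_{t,k})$ with the latter constant, then declares the rest ``straightforward.'' Your write-up supplies exactly the details behind these two sentences --- the decoupling via \eqref{eqn:W_pi_decouple}, the one-hot linear-functional reading of \eqref{eqn:local-value-analytical}, and the piecewise-constant/left-endpoint argument for the step function $D_{t,k}(P_{t,k})$ --- so there is no substantive difference in method, only in level of detail.
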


\begin{proof}
    The feasible power set $\mathscr{P}_{t,k}$ is the minimum required transmission power to transmit an integer number of packets, thus it will not affect the optimality. Then $\mathsf{P2(k)}$ can be solved by one-dimensional search in $\mathscr{P}_{t,k}$. $
        \Delta\mathbf{z}_{t,k}^{\mathsf{T}}(P_{t,k})\mathbf{V}_{t,k}
        \!=\!
        W_{t,k}^{\Pi}\big(Q_{t,k}^{\mathrm{D}}(P_{t,k})\big)
        \!-\!
        W_{t,k}^{\Pi}(Q_{t,k})
    $, where $W_{t,k}^{\Pi}(Q_{t,k})$ is a constant in $\mathsf{P2(k)}$. Thus, the proof is straightforward.
\end{proof}

As a summary, the system can be scheduled in a semi-distributed manner as elaborated below.
\begin{itemize}
    \item
          \textbf{Step 1:} At the beginning of one scheduling period, each UE (say the $k$-th UE) calculates $\{\mathbf{V}_{t,k}|\forall t\}$ locally based on SCSI $\mathcal{I}^{\mathrm{SCSI}}_{t}$ and the angular velocity $\omega_{k}$.
    \item
          \textbf{Step 2:} At the beginning of the $t$-th frame ($\forall t$), each UE (say the $k$-th UE) calculates the optimal power $P_{t,k}^{\psi}$ by assuming it is selected, and reports $G_{t,k}^{\psi}$ and $P_{t,k}^{\psi}$ to the BS via uplink signaling channel.
    \item
          \textbf{Step 3:} After receiving the reports from all UEs, the BS determines the improved scheduling policy $\Psi_{t}$.
\end{itemize}

\section{Simulations and Discussion}
\label{sec:simulations}
In this part, the performance of the proposed algorithm is demonstrated via numerical simulations with existing benchmark algorithms for comparison. There are eight UEs in the system where four UEs are static with zero angular velocities (indexed with $1\!\sim\!4$) while the other four UEs are rotating with angular velocity $2$ rad/s (indexed with $5\!\sim\!8$). We compare the proposed algorithms with the following three benchmarks, which are referred to BM1, BM2 and BM3. For fair comparison, the fixed transmission power of the base policy and benchmarks are the same, i.e., $P_{\mathrm{BSL}}\!=\!P_{\mathrm{BM}}$.

\begin{BM}[Dynamic Backpressure]
    The transmission power to each selected UE is fixed to $P_{\mathrm{BM}}$. The UE selection is based on the backpressure algorithm \cite{georgiadis2006resource} according to the real-time data rate and queue length, i.e., $
        d_{t}
        \!=\!
        \mathop{\arg\max}_{k\in\mathcal{K}}R_{t,k}(P_{\mathrm{BM}})Q_{t,k}
    $.
\end{BM}

\begin{BM}[Largest-Rate First]
    The transmission power to each selected UE is fixed to $P_{\mathrm{BM}}$. In each frame, the UE with the largest data rate is selected, i.e., $
        d_{t}
        \!=\!
        \mathop{\arg\max}_{k\in\mathcal{K}}R_{t,k}(P_{\mathrm{BM}})
    $.
\end{BM}

\begin{BM}[Longest-Queue First]
    The transmission power to each selected UE is fixed to $P_{\mathrm{BM}}$. In each frame, the UE with the longest queue is selected, i.e., $
        d_{t}
        \!=\!
        \mathop{\arg\max}_{k\in\mathcal{K}}Q_{t,k}
    $.
\end{BM}

The instantaneous SNR and queue length of a static UE (indexed by $k\!=\!1$) and a rotating UE (indexed by $k\!=\!5$) in one realization of scheduling period are illustrated in Fig. \ref{fig:scenario3}. The $5$-th UE cannot find appropriate combiner due to UE rotation and limited FoV in the middle of the scheduling period, leading to weak SNRs. It can be observed that the proposed scheme can predict the low SNR period of rotating UEs and schedule more transmission opportunities to them before the low SNR period, so that the packet drop rate can be significantly reduced. As a comparison, the benchmarks suffer from high packet drop rate during the low SNR period. This demonstrates the performance gain of the sensing-based channel prediction and the proposed predictive scheduling framework.

\begin{figure}[tb]
    \centering
    \includegraphics[width=0.8\linewidth]{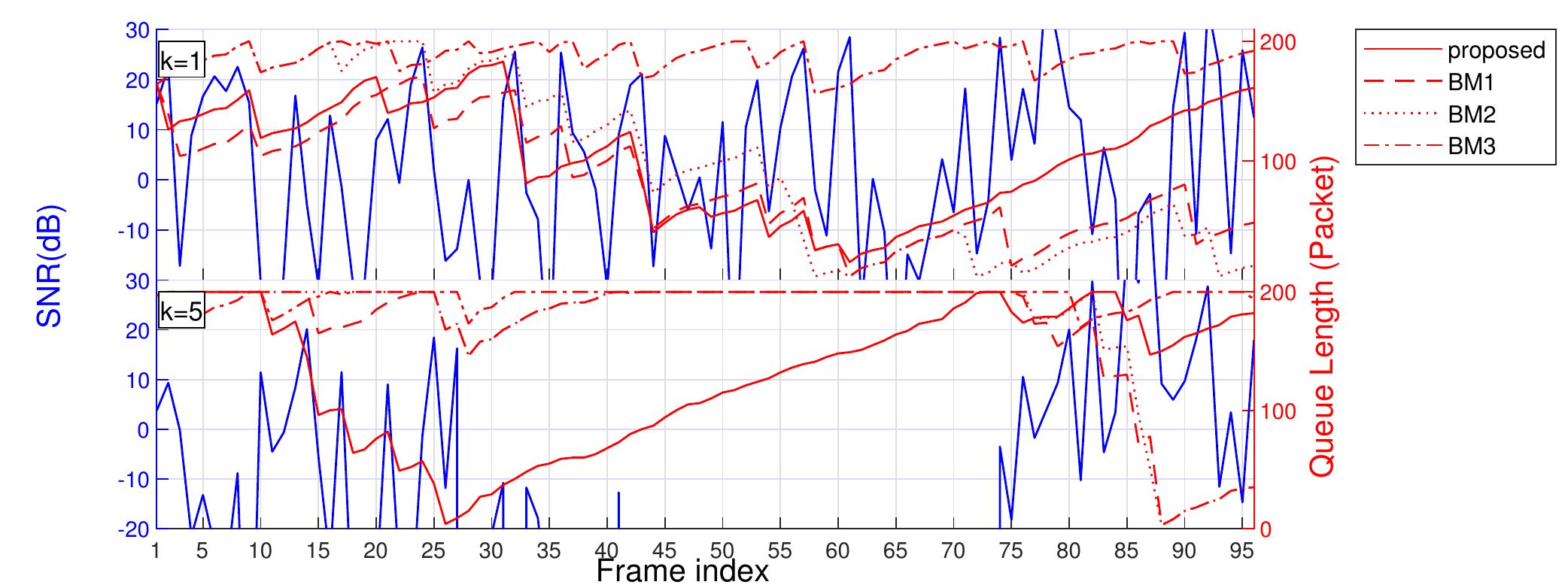}
    \caption{The dynamics of SNR and queue length. $T\!=\!100$, $Q_{\mathrm{max}}\!=\!200$, $Q_{1,k}\!\sim\!\mathcal{U}(100,200)$, $\lambda_{k}\!\sim\!\mathcal{U}(2,6)$, $W\!=\!400$ MHz, $N_{\mathrm{F}}\!=\!10$ ms, $B\!=\!30000$ bytes, $N_{\mathrm{R}}\!=\!16$, $N_{\mathrm{T}}\!=\!32$,  $\theta_{\mathrm{max}}\!=\!-\theta_{\mathrm{min}}\!=\!90^{\circ}$, $\phi_{\mathrm{max}}\!=\!-\phi_{\mathrm{min}}\!=\!30^{\circ}$, $P_{\mathrm{max}}\!=\!30$ dBm, $N_{k}^{\mathrm{cl}}\!=\!3$, $N_{k,i}^{\mathrm{ray}}\!=\!20$, $\sigma_{\alpha;k,i}^{2}\!\sim\!\mathcal{U}(4\!\times\!10^{-15},4\!\times\!10^{-14})$, $\sigma_{\phi;k,i}\!=\!\sigma_{\theta;k,i}\!=\!5^{\circ}$, $N_{0}\!=\!-174$ dBm/Hz, $w_{\mathrm{P}}\!=\!1500$, $w_{\mathrm{Q}}\!=\!2000$, and $P_{\mathrm{BSL}}\!=\!P_{\mathrm{BM}}\!=\!27$ dBm.}
    \label{fig:scenario3}
    \vspace{-0.5cm}
\end{figure}

\begin{figure}
    \centering
    \subfloat[]{
        \includegraphics[width=0.6\linewidth]{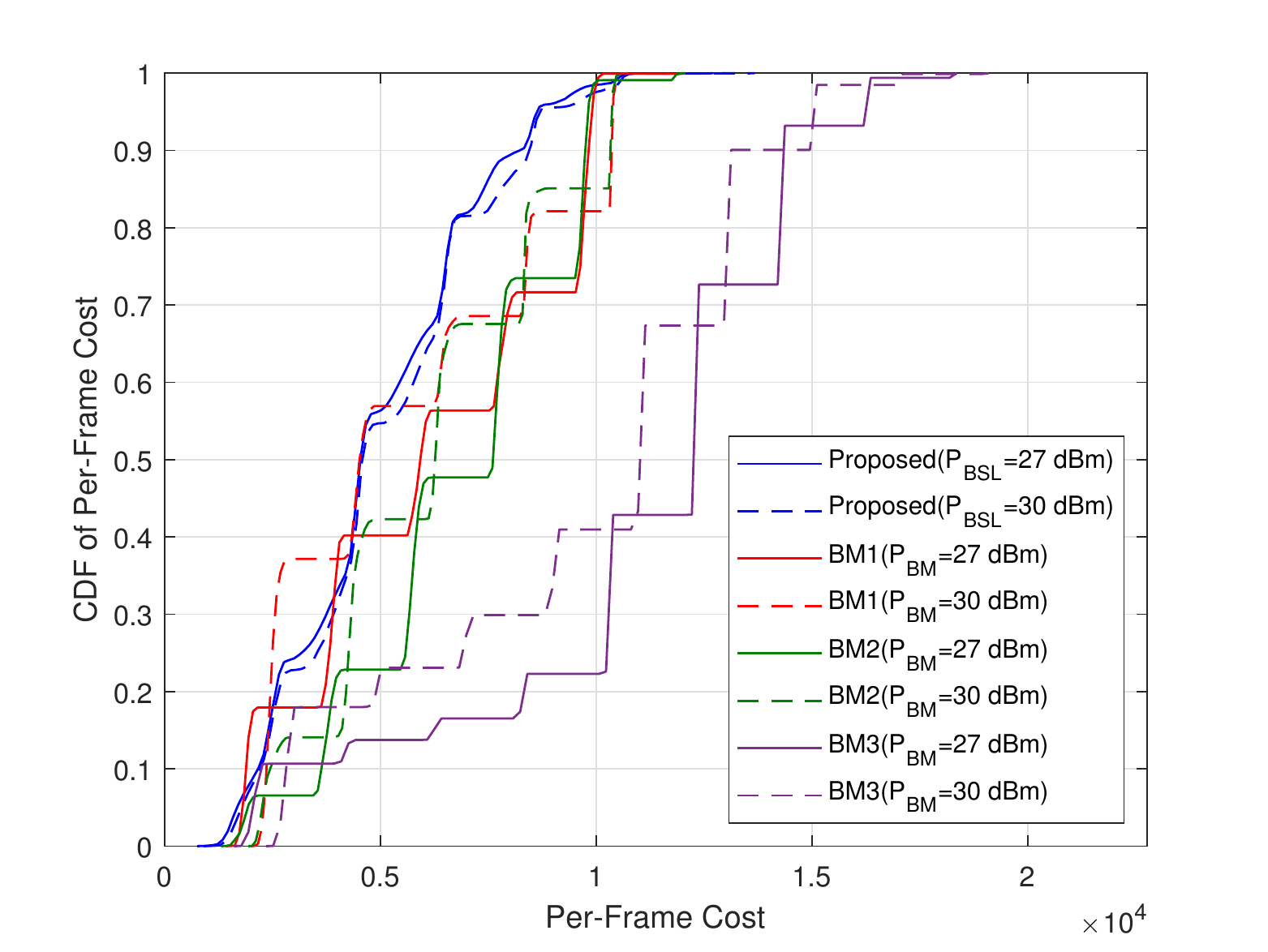}
    }
    \vspace{-0.3cm}
    \hfill
    \subfloat[]{
        \includegraphics[width=0.6\linewidth]{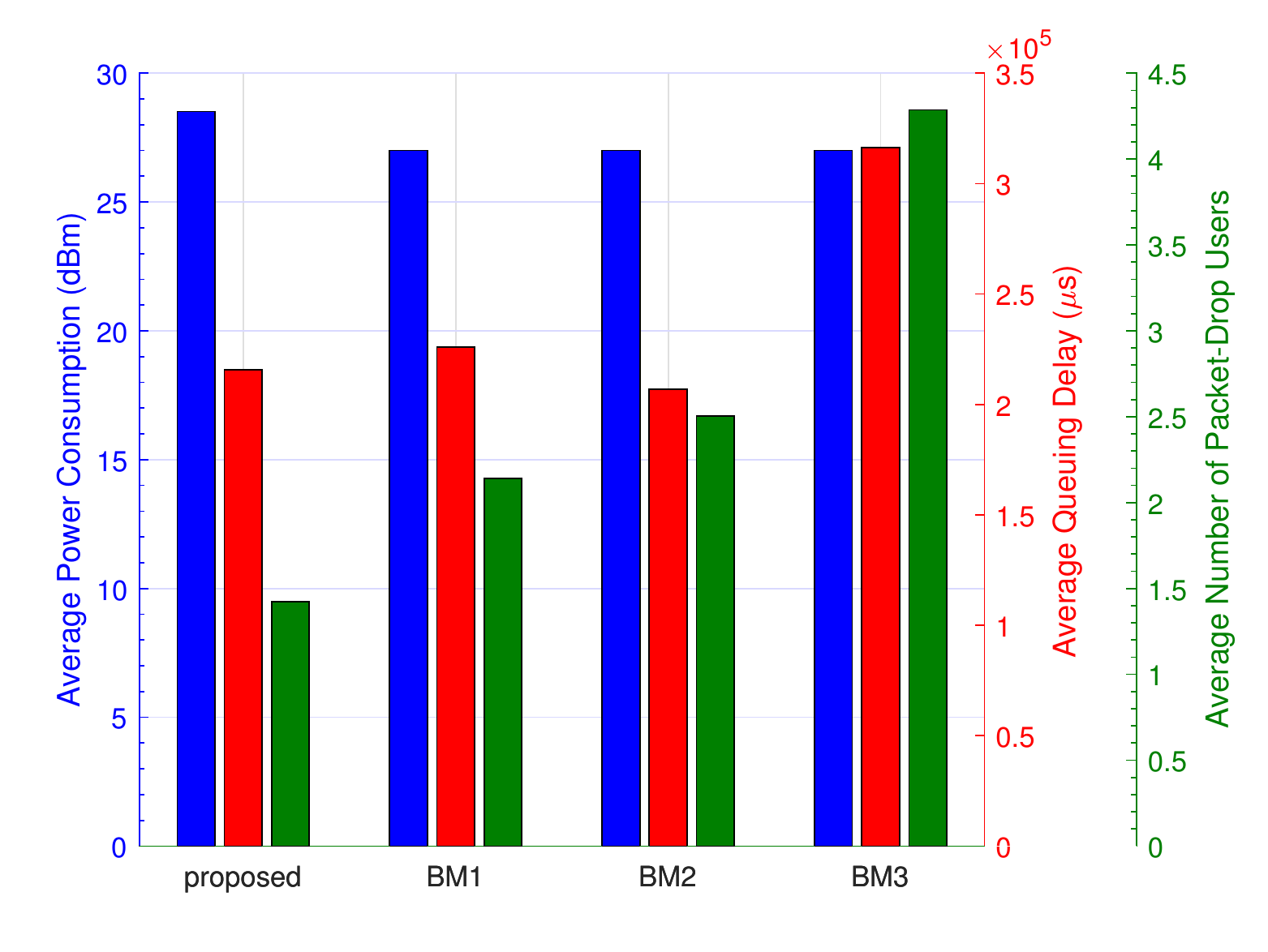}
    }
    \caption{(a) CDF of the per-frame cost. (b) Average transmission power, queuing delay and the number of packet-drop UEs.}
    \label{fig:CDF_and_bar}
    \vspace{-0.7cm}
\end{figure}

While Fig. \ref{fig:scenario3} shows the system performance in a snapshot, Fig. \ref{fig:CDF_and_bar}(a) shows CDFs of per-frame cost of the proposed scheme as well as the benchmarks. It can be observed that the proposed algorithm has significantly better CDF curves than the benchmarks. More insights can be obtained from Fig. \ref{fig:CDF_and_bar}(b), where average transmission power, queuing delay, and the number of packet drop UEs are illustrated. BM1 has the medium cost of delay and packet drop penalty. This is because it makes UE selection according to both the queue length and data rate. BM2 has the least cost of delay because it attempts to decrease the queue length as much as possible in every single frame. BM3 takes only the QSI into account but neglects the channel state information (CSI), which results in the worst performance. The proposed scheme manages to achieve the minimum packet drop rate, while keeping the average queuing delay in a low level. This demonstrates the benefits of channel prediction of the proposed scheme in suppressing the packet drop rate with non-stationary mmWave channel statistics.

\section{Conclusion}
\label{sec:conclusion}
In this paper, we consider the downlink transmission scheduling in an mmWave cell. Each UE is rotating with a predictable angular velocity for a number of frames, where the angular velocity of rotation can be measured by embedded motion sensors and reported to the BS. We first propose an SCSI-based beam alignment scheme. Then, we formulate the joint optimization of the downlink UE selection and power allocation as a finite-horizon MDP. To address the {\textit{curse of dimensionality}}, we finally propose a novel approximate MDP approach via one-step policy improvement over a base policy. Simulations show that the proposed MDP solution framework can effectively exploit the motion sensors to predict the future performance, resulting in an efficient scheduling algorithm.

\appendices

\section{PROOF of LEMMA \ref{lem:CDF_Y}}
\label{proof:CDF_of_Y}
Note that $
    Y_{t,k}
    \!=\!
    |\mathbf{w}_{t,k}^{\mathsf{H}}\mathbf{H}_{t,k}\mathbf{f}_{t,k}|^{2}
    \!=\!
    \Re^{2}(\mathbf{w}_{t,k}^{\mathsf{H}}\mathbf{H}_{t,k}\mathbf{f}_{t,k})
    \!+\!
    \Im^{2}(\mathbf{w}_{t,k}^{\mathsf{H}}\mathbf{H}_{t,k}\mathbf{f}_{t,k})
$. According to \cite{1033686}, when $N_{\mathrm{R}}$ and $N_{\mathrm{T}}$ are both sufficiently large, we have $
    \big|\mathbf{a}^{\mathsf{H}}_{\mathrm{R}}(\phi_{q_{t}^{\dagger}})\mathbf{a}_{\mathrm{R}}(\phi_{t,k,i,\ell})\big|^{2}
    \!\to\!
    \{0,1\}
$, $
    \big|\mathbf{a}^{\mathsf{H}}_{\mathrm{T}}(\theta_{p_{t}^{\dagger}})\mathbf{a}_{\mathrm{T}}(\theta_{t,k,i,\ell})\big|^{2}
    \!\to\!
    \{0,1\}
$. Denote the set $
    \mathcal{I}_{t,k,i}
    \!\triangleq\!
    \big\{\ell\big|
    |\mathbf{a}^{\mathsf{H}}_{\mathrm{R}}(\phi_{q_{t}^{\dagger}})\mathbf{a}_{\mathrm{R}}(\phi_{t,k,i,\ell})|^{2}
    \!\to\!
    1
$, $
    |\mathbf{a}^{\mathsf{H}}_{\mathrm{T}}(\theta_{p_{t}^{\dagger}})\mathbf{a}_{\mathrm{T}}(\theta_{t,k,i,\ell})|^{2}
    \!\to\!
    1
$, $
    \Lambda_{\mathrm{R}}(\phi_{t,k,i,\ell})\Lambda_{\mathrm{T}}(\theta_{t,k,i,\ell})
    \!=\!
    1
    \big\}
$, then $
    \Phi_{t,k,i}
    \!\triangleq\!
    |\mathcal{I}_{t,k,i}|
$ will follow the binomial distribution. Conditioned on $\{\Phi_{t,k,i}|\forall i\}$, the real and imaginary parts of $\mathbf{w}_{t,k}^{\mathsf{H}}\mathbf{H}_{t,k}\mathbf{f}_{t,k}$ will follow normal distributions, e.g., $
    \Re\big(\mathbf{w}_{t,k}^{\mathsf{H}}\mathbf{H}_{t,k}\mathbf{f}_{t,k}\big)\big|_{\{\Phi_{t,k,i}|\forall i\}}
    \!\sim\!
    \mathcal{N}\big(0,\frac{1}{2}\sum_{i\!=\!1}^{N_{k}^{\mathrm{cl}}}\Phi_{t,k,i}\sigma_{\alpha;k,i}^{2}\big)
$.
Remind that if $\chi_{2}^{2}$ is a random variable following chi-squared distribution with degrees of freedom $2$, then the CDF of $\chi_{2}^{2}$ is $
    F_{\chi_{2}^{2}}(x)
    \!=\!
    1
    \!-\!
    \exp(\!-\!x/2),\ x\!>\!0
$. Hence, Lemma \ref{lem:CDF_Y} is straightforward.

\section{PROOF OF LEMMA \ref{lem:analytical_expression}}
\label{proof:analytical_expression}
$\mathbf{u}_{t,k}$ and $\mathbf{s}_{t,\tau,k}$ represent the post-decision and pre-decision probability vector for the $k$-th queue respectively. $[\mathbf{c}^{(\!1\!)}]_{i}$ and $[\mathbf{c}^{(\!2\!)}]_{i}$ represent the per-frame queuing and packet-drop cost for the $k$-th UE in the $\tau$-th frame for cases $t\!+\!1\!\leq\!\tau\!\leq\!T$ and $\tau\!=\!T\!+\!1$, respectively. $\mathbf{M}_{t,k}$ and $\mathbf{P}_{k}$ are transition probability matrices for the $k$-th queue considering both packet departure and arrival and only the packet arrivals, respectively. In \eqref{eqn:local-value-analytical}, $
    \mathbf{s}_{t,\tau,k}^{\mathsf{T}}\mathbf{c}^{(\!1\!)}
$ and $
    \mathbf{s}_{t,T+1,k}^{\mathsf{T}}\mathbf{c}^{(\!2\!)}
$ counts for the average queuing and packet-drop cost in the $\tau$-th frame for cases $t\!+\!1\!\leq\!\tau\!\leq\!T$ and $\tau\!=\!T\!+\!1$, respectively. Hence, Lemma \ref{lem:analytical_expression} is straightforward.

\bibliographystyle{IEEEtran}
\bibliography{IEEEabrv,mmWave_Rotation_v1.0}

\end{document}